\documentclass[11pt, a4paper]{article}
 \usepackage{fixltx2e}
\usepackage{fix-cm}

\usepackage{amsmath, amsfonts, amssymb, mathrsfs}

 \usepackage{graphicx}
\graphicspath{{img/}}
 \usepackage{xcolor}
 \usepackage{xfrac}
 \usepackage{multi row}
  \usepackage{dsfont}
 \usepackage{arydshln} 
 \usepackage{natbib}
 \bibliographystyle{apalike}
 \usepackage{enumerate}
 \usepackage[colorlinks,citecolor=blue,urlcolor=blue,linkcolor=blue,linktocpage=true]{hyperref}
\usepackage{authblk}
 	
 \usepackage{amsthm}

 \newcommand{\deriv}{\mathrm{d}} 
  
 \newcommand{\argmin}{\operatorname{arg\,min}~} 
 \newcommand{\simiid}{\overset{i.i.d.}{\sim}}
\newcommand{\Supp}{\operatorname{Supp}}
 \newcommand{\itemb}{\item[$\bullet$]}

\newcommand{\eps}{\varepsilon}
 
 
 \theoremstyle{plain}
 \newtheorem{thm}{Theorem}
 \newtheorem{proposition}[thm]{Proposition}
 \newtheorem{lemma}[thm]{Lemma}

\usepackage[margin=1.2in]{geometry}

 \title{Bayesian functional linear regression\\ with sparse step functions}
\setcounter{Maxaffil}{0}
\author[1,2]{Paul-Marie Grollemund}
\author[2]{Christophe Abraham }
\author[2]{Me\"{\i}li Baragatti}
\author[3]{Pierre Pudlo}
\date{}

\affil[1]{IMAG UMR 5149, Universit\'e de Montpellier, CNRS,
  Place E. Bataillon, 34095 Montpellier CEDEX, France
  \href{mailto:paul-marie.grollemund@umontpellier.fr}{paul-marie.grollemund@umontpellier.fr}
}

\affil[2]{MISTEA UMR 729, INRA, Montpellier SupAgro,
  Place Pierre Viala, 34060 Montpellier CEDEX, France
    \href{mailto:christophe.abraham@supagro.fr}{christophe.abraham@supagro.fr}
    \href{mailto:meili.baragatti@supagro.fr}{meili.baragatti@supagro.fr}
}

\affil[3]{I2M UMR 7373, Aix-Marseille Universit\'e, CNRS, Centrale Marseille, 
  Rue F. Joliot Curie,
  13453 Marseille CEDEX 13, France
    \href{mailto:pierre.pudlo@univ-amu.fr}{pierre.pudlo@univ-amu.fr}
}

\setlength\parskip{0.5\baselineskip}
\setlength{\parindent}{15pt}

\begin{document}
\maketitle

\begin{abstract} 
    The functional linear regression model is a common tool to determine the relationship between a
  scalar outcome and a functional predictor seen as a function of time. This paper focuses on the
  Bayesian estimation of the support of the coefficient function. To
  this aim we propose a parsimonious and adaptive decomposition of the
  coefficient function as a step function, and a model including a
  prior distribution that we name Bayesian functional Linear
  regression with Sparse Step functions (Bliss). The aim of the method
  is to recover areas of time which influences the most the outcome. A
  Bayes estimator of the support is built with a specific loss
  function, as well as two Bayes estimators of the coefficient
  function, a first one which is smooth and a second one which is a
  step function. The performance of the proposed methodology is
  analysed on various synthetic datasets and is illustrated on a
  black P\'erigord truffle dataset to study the influence of rainfall
  on the production.\\

  \noindent \textbf{MSC 2010 subject classifications:} Primary 62F15; Secondary
  62J05.\\

  \noindent \textbf{Keywords:} Bayesian regression, function data,
  support estimate, parsimony.
\end{abstract}

\section{Introduction} 

Consider that one wants to explain the final outcome $y$ of a
process along time (for instance the amount of some agricultural
production) thanks to what happened during the whole history (for instance,
the rainfall history, or temperature history). Among the statistical
learning methods, functional linear models \citep{Ramsay2005} aim at predicting a
scalar $y$ based on covariates $x_1(t), x_2(t),\ldots, x_q(t)$ lying
in a functional space, $L^2(\mathcal T)$ say, where $\mathcal T$ is an
interval of $\mathbb R$. If $x_{q+1},\ldots, x_p$ are additional
scalar covariates, the outcome $y$ is predicted linearly with
\begin{equation}
\widehat y =  \mu + \int_{\mathcal T}\beta_1(t) x_1(t) \deriv t +
\cdots
+ \int_{\mathcal T} \beta_q(t) x_q(t)\deriv t + \beta_{q+1}x_{q+1} +
\cdots + \beta_p x_p,
\label{eq:yhat}
\end{equation}
where $\mu$ is the intercept, $\beta_1(t),\ldots,\beta_q(t)$ the coefficient functions,
and $\beta_{q+1},\ldots,\beta_p$ the other (scalar) coefficients.  In this framework the
functional covariates $x_j(t)$ and the unknown coefficient functions $\beta_j(t)$ lie in
the $L^2(\mathcal T)$ functional space, thus we face a nonparametric problem. Standard
methods \citep{Ramsay2005} for estimating the $\beta_j(t)$'s, $1\le j\le q$, are based on
the expansion onto a given basis of $L^2(\mathcal T)$. See \citet{Reiss2015} for a
comprehensive scan of the methodology.
A question which arises naturally in many applied contexts is the
detection of periods of time which influence the most the final
outcome $y$. 
Note that each integral in \eqref{eq:yhat} is a weighted average of the
whole trajectory of $x_j(t)$, and does not identify any specific impact of
local period of the process.
These time periods might vary from one covariates to
another. For instance, in agricultural science, the final outcome may
depend on the amount of rainfall during a given period (e.g., to
prevent rotting), and the temperature during another (e.g., to prevent
freezing). Standard methods do not answer the above question, namely
to recover the support of the coefficient functions $\beta_j(t)$ with
the noticeable exception of \citet{picheny2016interpretable}.

Unlike the scalar-on-image models, we focus here on one-dimensional
functional covariates. When $\mathcal T$ is not a one dimensional
space, the problem becomes much more complex. The functional
covariates and the coefficient functions are all discretized,
e.g. via the pixels of the images, see \citet{Goldsmith2014, Li2015, kang2016scalar}. In these
two- or three-dimensional problems, because of the curse of
dimensionality, the points which are included in
the support of the coefficient functions follow a parametric
distribution, namely an Ising model. One important issue solved by these authors
is the sensitivity of the parameter estimate of the Ising model in the neighborhood of the phase
transition. 
%

When $\mathcal T$ is a one dimensional space, we can build
nonparametric estimates. In this vein, using the $L^1$-penalty to
achieve parsimony, the Flirti method of \citet{James2009} obtains an
estimate of the $\beta_j(t)$'s assuming they are sparse functions with
sparse derivatives. 
Nevertheless Flirti is difficult to calibrate: its numerical results
depend heavily on tuning parameters. From our experience,
Flirti's estimate is so sensitive to the values of the tuning
parameters that we can miss the range of good values with
cross-validation. The authors propose to rely on
cross-validation to set these tuning parameters. But, by definition, cross-validation assesses
the predictive performance of a model, see \citet{Arlot2010} and the
many references therein. And, of course, 
optimizing the performance regarding the prediction of $y$ does not
provide any guaranty regarding the support
estimate.
\citet{Zhou2013} propose a two-stage method to
estimate the coefficient function. Preliminarily, $\beta(t)$ is
expanded onto a B-spline basis to reduce the dimension of the
model. The first stage estimates the coefficients of the truncated
expansion onto the basis
using a lasso method to find the null intervals. Then, the second
stage refines the estimation of the null intervals and estimates the
magnitude of $\beta(t)$ for the rest of the support.  Another approach
to obtain parsimony is to rely on Fused lasso \citep{Tibshirani2005}:
if we discretize the covariate functions and the coefficient
function as described in \citet{James2009}, the penalization of Fused
lasso induces parsimony in the coefficients. But, once again the calibration
of the penalization is performed using cross-validation which targets
predictive perfomance rather than the accuracy of the support estimate.

In this paper, we propose Bayesian estimates of both the supports and
the coefficient functions $\beta_i(t)$'s. To keep the dimension of the
parameter as low as possible, we stay with the simplest and the most
parsimonious shape of the coefficient function over its
support. Hence, conditionally on the support, the coefficient functions
$\beta_j(t)$'s are supposed to be step function (piecewise constant
function can be described with a minimal number of parameters). We can
decompose any step function $\beta(t)$ as
\[
\beta(t) = \sum_{k=1}^K \beta^\ast_k \frac{1}{|\mathcal{I}_k|}\mathbf
1\{t\in \mathcal I_k\}
\]
where $\mathcal I_1,\ldots, \mathcal I_K$ are intervals of $\mathcal T$, $|\mathcal I_k|$ is the
length of the interval $\mathcal I_k$ and $\beta^\ast_k$ are the coefficients
of the expansion. The support is the union of all $\mathcal I_k$'s if
the coefficients $\beta^\ast_k$ are non null. Period of
times which does not influence the outcome are outside the support. 
The above model has another advantage: such step functions change
values abruptly from $0$ to a non null value. Hence their supports are
relatively clear. On the contrary, if we have at our disposal a smooth
estimate of a coefficient function $\beta_j(t)$ in the model given by
\eqref{eq:yhat}, the support of the estimate is the whole $\mathcal T$
and we have to find regions where the estimate is not
significantly different from $0$.
Moreover, with a full Bayesian procedure, we can evaluate the
uncertainty on the estimates of the support and the values of the
coefficient functions.

The paper is organized as follows. Section~\ref{sec:method} presents the Bayesian modelling,
including the prior distribution in \ref{sec:model}, the support estimate in \ref{sec:support}
and the coefficient function estimate in \ref{sec:beta}. Section~\ref{sec:simulation} is
devoted to the study of numerical results on synthetic data, with comparison to other methods
and sensibility to the tuning of the hyperparameters of the prior. Section~\ref{sec:reel}
details the results of Bliss on a dataset concerning the influence of rainfall on the growth of
the black P\'erigord truffle.  

\section{The Bliss method}
\label{sec:method}

We present the hierarchical Bayesian model in Section~\ref{sec:model},
the Bayes estimate of the support in Section~\ref{sec:support} and two
Bayes estimates of the coefficient function in
Section~\ref{sec:beta}. The implementation and visualization details
are given at the end of this second part.

\subsection{Reducing the model}
\label{sec:reducing}

Assume we have observed $n$ independent replicates $y_i$ ($1\le i\le
n$) of the outcome, explained with the functional covariates $x_{ij}(t)$
($1\le i \le n$, $1\le j \le q$) and the scalar covariates $x_{ij}$
($1\le i \le n$, $q+1\le j \le p$). 
The whole dataset will be denoted  $\mathcal D$ in what follows. Let
us denote by $x_i=\{x_{i1}(t),\ldots, x_{iq}(t), x_{i,q+1}, x_{ip}\}$
the set of all covariates, and by $\theta$ the set of all parameters,
namely $\{ \beta_1(t),\ldots, \beta_q(t), \beta_{q+1},\ldots, \beta_p, \mu,
\sigma^2\}$, where $\sigma^2$ is a variance parameter.
We resort to the Gaussian likelihood defined as
\begin{equation}
  \label{eq:model_bayes_p}
  y_i | x_i,\theta  \overset{\text{ind}}{\sim} 
  \mathcal{N} \left( \mu + \sum_{j=1}^q\int_{\mathcal T}\beta_j(t)
    x_{ij}(t) \deriv t +
\sum_{j=q+1}^p\beta_{j}x_{ij},~\sigma^2 \right), \qquad i = 1,\dots,n.
\end{equation}
If we set a prior on the parameter $\theta$ which includes all
$\beta_j(t)$'s, $\beta_j$, $\mu$ and $\sigma^2$, we can recover the full
posterior from the following conditional distributions (both theoretically
and practically with a Gibbs sampler) :
\begin{align*}
  \beta_j(t), \mu, \sigma^2 &\,|\, \mathcal D, \beta_{-j}
  \\
  \beta_j, \mu, \sigma^2 &\,|\, \mathcal D, \beta_{-j}
\end{align*}
where $\beta_{-j}$ represents the set of $\beta$-parameters except $\beta_j$
or $\beta_j(t)$. 
Hence we can reduce the problem to a single functional covariate and
no scalar covariate.
The model we have to study becomes 
\begin{equation}
  \label{eq:model_bayes_1}
  y_i | x_i(t),\mu, \beta(t), \sigma^2  \overset{\text{ind}}{\sim} 
  \mathcal{N} \left( \mu + \int_{\mathcal{T}} \beta(t) x_i(t)  \deriv
    t \, ,~\sigma^2 \right), \qquad i = 1,\dots,n,
\end{equation}
with a single functional covariate $x_i(t)$.

\subsection{Model on a single functional covariate}
\label{sec:model}
For parsimony we seek the coefficient function $\beta(t)$ in the following set of sparse step functions 
\begin{equation}
  \label{eq:calBK}
  \mathcal E_K=\left\{ \sum_{k=1}^K \beta^\ast_k \frac{1}{|\mathcal
      I_k|}\mathds{1}\left\{t \in \mathcal I_k \right\}: \ 
  \mathcal I_1,\ldots, \mathcal I_K\text{ intervals }\subset\mathcal T, \beta^\ast_1,\ldots,\beta^\ast_K\in\mathbb R\right\}
\end{equation}
where $K$ is a hyperparameter that counts the number of intervals
required to define the function. Note that we do not make any
assumptions regarding the intervals
$\mathcal I_1,\ldots,\mathcal I_K$. First they do not form a partition
of $\mathcal T$. As a consequence, a function $\beta(t)$ in
$\mathcal E_K$ is piecewise constant and null outside the union of the
intervals $\mathcal I_k$, $k=1,\ldots,K$.  This union is the support
of $\beta(t)$, hence the model includes an explicit description of the
support. Second the intervals $\mathcal I_1, \ldots, \mathcal I_K$ can
even overlap to ease the parametrization of the intervals: we do not
have to add constraints on the parametrization to remove possible
overlaps.

Now if we pick a function $\beta(t)\in \mathcal E_K$ with
\begin{equation}
  \label{eq:beta_forme}
  \beta(t) = \sum_{k=1}^K \beta^*_k \frac{1}{|\mathcal I_k|}\mathds{1}\left\{t \in \mathcal I_k \right\},
\end{equation}
the integral of the covariate functions $x_i(t)$ against $\beta(t)$ becomes a linear combination of partial integrals of the covariate function over the intervals $\mathcal I_k$ and we predict $y_i$ with
\[
\widehat{y_i} = \mu + \sum_{k=1}^K \beta^\ast_k\, x_i(\mathcal I_k),\quad
\text{where } 
x_i(\mathcal I_k)=\frac{1}{|\mathcal I_k|}\int_{\mathcal I_k} x_i(t)  \deriv t.
\]
Thus, given the intervals $\mathcal I_1,\ldots, \mathcal I_K$, we face a multivariate linear model with the usual Gaussian likelihood. 

It remains to set a parametrization on $\mathcal E_K$ and a prior distribution. Each interval $\mathcal I_k$ is parametrized with its center $m_k$ and its half length $\ell_k$:
\begin{equation}
  \label{eq:intervalle}
  \mathcal I_k = \left[ m_k - \ell_k, m_k + \ell_k \right] \cap \mathcal T.
\end{equation}
As a result, when $K$ is fixed, the parameter of the model is
\[\theta=(m_1,\ldots, m_K, \ell_1,\ldots, \ell_K, \beta^\ast_1,\ldots,
\beta^\ast_K, \mu, 
\sigma^2).\]

We first define the prior on the support, that is to say on the
intervals $\mathcal I_k$. The prior on the center of each interval is uniformly
distributed on the whole range of time $\mathcal T$. This uniform
prior does not promote any particular region of $\mathcal
T$. Furthermore, the prior on the half-length of the interval $\mathcal I_k$ is
the Gamma distribution $\Gamma(a, b)$. To understand this prior and set hyperparameters $a$ and $b$, we introduce the prior probability that a given $t\in\mathcal T$ is in the support, namely
\begin{equation}
\alpha(t) = \int_{\Theta_K}\mathbf 1\{t \in S_\theta\} \pi_K(\theta) \deriv \theta
\label{eq:alpha}
\end{equation}
where $\pi_K$ is the prior distribution on the range of parameters $\Theta_K$ of dimension $3K+2$, and where $S_\theta=\Supp(\beta_\theta)$ is the support of $\beta_\theta(t)$ that is to say the union of the $\mathcal I_k$'s. 
The value of $\alpha(t)$ depends on hyperparameters $a$ and
$b$. These parameters should be fixed with the help of prior knowledge
on $\alpha(t)$.

\begin{figure}[t!]
  \centering
  \includegraphics[height=.8\textwidth]{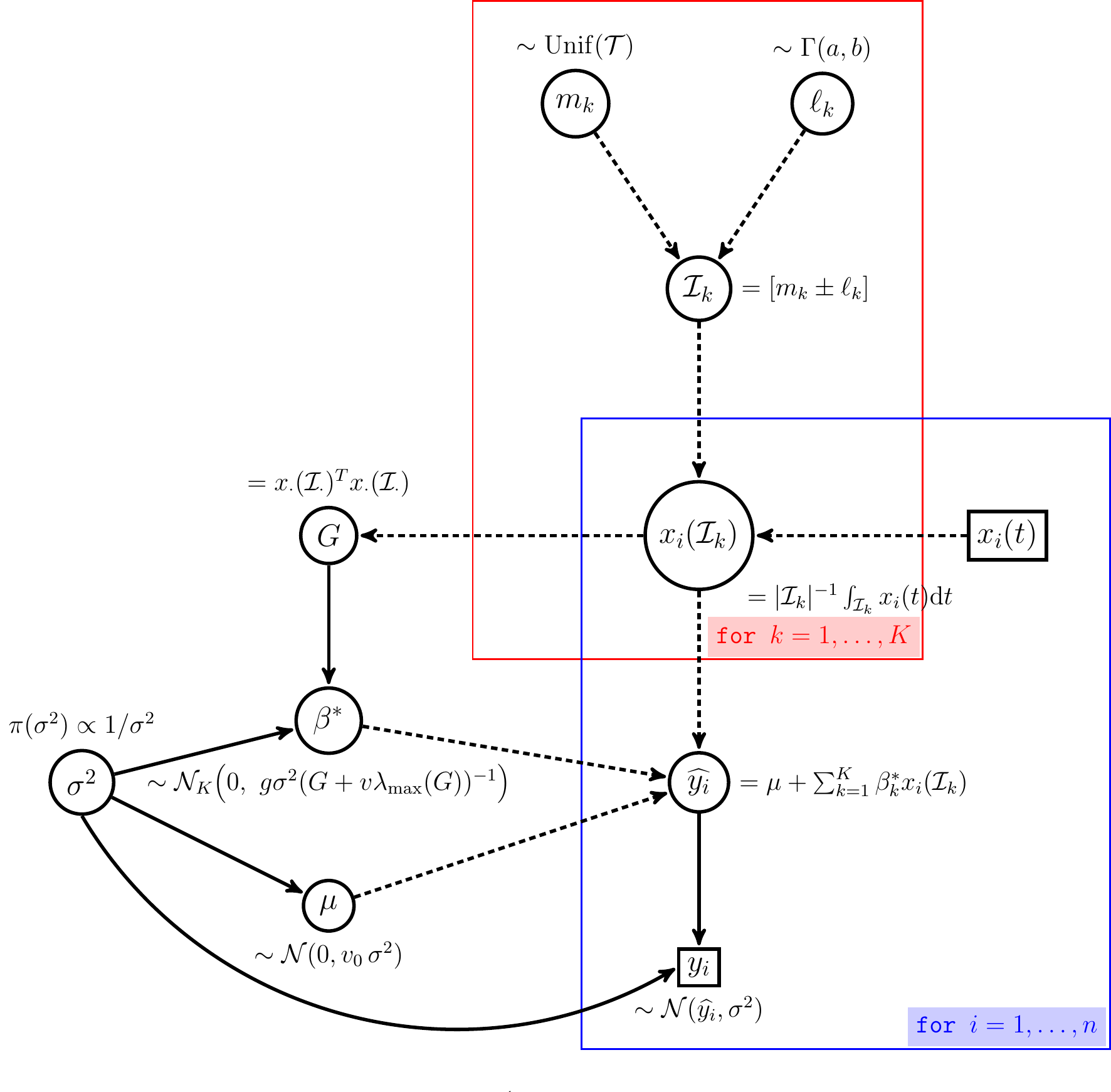} \\
  \caption{{The full Bayesian model.} \it The coefficient function
    $\beta(t)=\sum_{k=1}^K\beta^\ast_k \mathds{1}\{t\in \mathcal
    I_k\}/|\mathcal I_k|$ defines both a projection the covariate
    functions $x_i(t)$ onto $\mathbb R^K$ by averaging the function
    over each interval $\mathcal I_k$ and a prediction $\widehat{y_i}$
    which depends on the vector $\beta^\ast=(\beta^\ast_1,\ldots,
    \beta^\ast_K)$ and the intercept $\mu$. } 
  \label{fig:model}
\end{figure}

Given the intervals, or equivalently, given the $m_k$'s and
$\ell_k$'s, the functional linear model becomes a multivariate linear model with
$x_i(\mathcal I_k)$ as scalar covariates. We could have set a standard and well-understood prior on
$\beta^\ast|(\mathcal I_k)_{1\le k \le K}$, namely the $g-$Zellner prior, with $g=n$ in order to
define a vaguely informative prior.
More specifically, the design matrix given the intervals is 
\[
x_\cdot(\mathcal I_\cdot) = \left\{ x_i(\mathcal I_k),\ 1\le i\le n,\ 1\le k\le K\right\}.
\]
And the $g$-Zellner prior, with $g=n$ is given by
\begin{equation} \label{eq:Zellner}
  \pi(\sigma^2)  \propto 1/\sigma^2,
  \quad 
  \beta^\ast | \sigma^2 \sim \mathcal N_K\bigg(0, n\sigma^2 G^{-1}\bigg)
\end{equation}
where $\beta^* = (\beta_1^*, \dots, \beta_K^*)$ and
$G =x_\cdot(\mathcal I_\cdot)^T x_\cdot(\mathcal I_\cdot)$ is the Gram
matrix.  But, depending on the intervals $\mathcal I_k$, the
covariates $x_i(\mathcal I_k)$ can be highly 
correlated. (We recall here that the functional covariate can have
autocorrelation and that the intervals can overlap.)  That is why, in
this setting, the Gram matrix
$G= x_\cdot(\mathcal I_\cdot)^T x_\cdot(\mathcal I_\cdot)$ can be
ill-conditioned, that is to say not numerically invertible and we
cannot resort to the $g-$Zellner prior. To solve this issue we have to
decrease the condition number of $G$, and apply a Tikhonov
regularization. The resulting prior is a ridge-Zellner prior
\citep{Baragatti2012} replaces $G$ by $G+\eta I$ in
\eqref{eq:Zellner}, where $\eta$ is some scalar tuning the amount of
regularization and $I$ is the identity matrix.  Adding the $\eta I$
matrix shifts all eigenvalues of the Gram matrix by $\eta$. In order
to obtain a well-conditioned matrix, we decided to fix $\eta$ with the
help of the largest eigenvalue of the Gram matrix,
$\lambda_\text{max}(G)$ and
to set
$\eta=v \lambda_\text{max}(G)$ where $v$ is an hyperparameter of the model.

To sum up the above, the prior distribution on $\Theta_K$ is
\begin{align}
  \mu | \sigma^2      & \sim \mathcal{N}\left(0, v_0 \sigma^2\right), \notag \\
  \beta^* | \sigma^2 & \sim \mathcal{N}_K \left(0, n\sigma^2  (G + v\lambda_\text{max}(G)
                       I )^{-1}  \right), 
   \text{ where } G= x_\cdot(\mathcal I_\cdot)^T
x_\cdot(\mathcal I_\cdot), \notag \\
	\pi(\sigma^2)   & \propto 1/\sigma^2, 
 \label{eq:model_complet} \\
  m_k  &\simiid \text{Unif} \left( \mathcal{T} \right), \quad  k=1,\ldots, K, \notag \\
  \ell_k & \simiid \Gamma(a, 1), \quad  k=1,\ldots, K, \notag
\end{align}
The resulting Bayesian modelling is given in Figure~\ref{fig:model} and depends on
hyperparameters which are $v_0, v, a$ and $K$.  We denote by $\pi_K(\theta)$ and
$\pi_K(\theta|\mathcal D)$ the prior and the posterior distributions. 
We propose below default values for the hyperparameters $v_0, v, a$,
see Section \ref{choixhyperparametres} for numerical results that
supports this proposal.
\begin{itemize}
\item The parameter $v_0$ drives the prior information we put on the intercept
  $\mu$. This is clearly not the most important hyperparameter since
  we expect important information regarding $\mu$ in the
  likelihood. We recommend using $v_0=100\times \bar{y}^2$, where
  $\bar{y}$ is the average of the outcome on the dataset. Even if it may look like we set the prior with the
  current data, the resulting prior is vaguely non-informative.
\item The parameter $v$ is more difficult to set: it tunes the amount
  of regularization in the $g$-Zellner prior. Our set of numerical
  studies indicates, see Section~\ref{sec:simulation} below, that
  $v=5$ is a good value.
\item The parameter $a$ sets the prior length of an interval of the
  support. It should depend on the number $K$ of intervals. We
  recommend the value $a=(5K)^{-1}$ so that the average length of an
  interval from the prior distribution is proportional to $1/K$. Our
  numerical studies that constant $5$ in the above recommandation does
  not drastically influence the results.
\end{itemize}

Finally, the hyperparameter $K$ drives the number of intervals, thus
the dimension of $\Theta_K$. We can put an extra prior distribution on
$K$ and perform Bayesian model choice either to infer $K$ or to
aggregate posteriors coming from various values of $K$. There is a ban
on the use of improper prior together with Bayesian model choice (or
Bayes factor) because of the Jeffrey-Lindley paradox \citep[see,
e.g.][Section 5.2]{robert2007bayesian}. And a careful reader would
notice here the improper prior on $\sigma^2$. But it does not prohibit the
use of Bayesian choice because it is a parameter common to all models
(i.e., to all values of $K$ here).

\subsection{Estimation of the support}
\label{sec:support}
Regarding the inference of the support, an interesting quantity is the
posterior probability that a given $t\in\mathcal T$ is in the
support. It can be defined as the prior probability in
\eqref{eq:alpha}, that is to say
\begin{equation}
\alpha(t|\mathcal D) = \int_{\Theta_K}\mathbf 1\{t \in S_\theta\} \pi_K(\theta|\mathcal D) \deriv \theta.
\label{eq:alpha.posterior}
\end{equation}
Both functions $\alpha(t)$ and $\alpha(t|\mathcal D)$ can be easily
computed with a sample from the prior and the posterior
respectively. They are also relatively easy to interpret in term of
marginal distribution of the support: fix $t\in\mathcal T$,
\begin{itemize}
\item $\alpha(t)$ is the prior probability that $t$ is in the support of the coefficient function and
\item $\alpha(t|\mathcal D)$ is the posterior probability of the same event.
\end{itemize}

Now  let $L_\gamma(S, S_\theta)$ be the loss function given by 
\begin{equation}\label{eq:loss.support}
L_\gamma(S, S_\theta) = \gamma\int_0^1 \mathbf 1\{ t\in S\setminus S_\theta\} \deriv t + 
                  (1-\gamma)\int_0^1 \mathbf 1\{ t\in S_\theta\setminus S\} \deriv t 
\end{equation}
where $S_\theta= \Supp(\beta_\theta)$ is the  support of $\beta_\theta(t)$, the coefficient function as parametrized in \eqref{eq:beta_forme} and where $\gamma$ is a tuning parameter in $[0;1]$.
Actually, there is two type of errors when estimating the support:
\begin{itemize}
\item error of type I: a point $t\in\mathcal T$ which is really in the support $S_\theta$ has not been included in the estimate,
\item error of type II: a point $t\in\mathcal T$ has been included in the support estimate but does not lie into the real support $S_\theta$
\end{itemize}
and the tuning parameter $\gamma$ allows to set different weights on both types of error. Note that, when $\gamma=1/2$, the loss function is one half of the Lebesgue measure of the symmetric difference $S\Delta S_\theta$.

Bayes estimates are obtained by minimizing a loss function integrated
with respect to the posterior distribution, see
\citet{robert2007bayesian}. Hence, in this situation, Bayes estimates of the support are given by
\begin{equation}\label{eq:optim.support}
\widehat{S}_\gamma(\mathcal D) \in \underset{ S \subset \mathcal T }{\argmin} 
\int_{\Theta_K} L_\gamma(S, S_\theta) \pi_K(\theta|\mathcal D)\deriv \theta.
\end{equation}
The following theorem shows the existence of the Bayes estimate and how to compute it from $\alpha(t|\mathcal D)$. 
\begin{thm} \label{thm:support}
  The level set of $\alpha(t|\mathcal D)$ defined by 
  \[ \widehat{S}_\gamma(\mathcal D) = \{t\in\mathcal T\ :\ \alpha(t|\mathcal D)\ge \gamma \} \]
  is a Bayes estimate associated to the above loss $L_\gamma(S, S_\theta)$.
  Moreover, up to a set of null Lebesgue measure, any Bayes estimate
  $\widehat{S}_\gamma(\mathcal D)$ that solves the optimisation
  problem given in \eqref{eq:optim.support} satisfies
  \[
  \{t\in\mathcal T\ :\ \alpha(t|\mathcal D)> \gamma\}
  \subset\widehat{S}_\gamma(\mathcal D)  \subset \{t\in\mathcal T\ :\ \alpha(t|\mathcal D)\ge \gamma\}.
  \]
\end{thm}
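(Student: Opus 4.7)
The plan is to apply Fubini's theorem to swap the integrals over $\Theta_K$ and over $\mathcal T$ in the posterior risk
\[
R(S) = \int_{\Theta_K} L_\gamma(S, S_\theta)\, \pi_K(\theta|\mathcal D)\deriv\theta,
\]
which will reduce the problem from minimizing over subsets of $\mathcal T$ to a pointwise selection in $t$.

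First I would rewrite
\[
\mathbf 1\{t\in S\setminus S_\theta\} = \mathbf 1\{t\in S\}\bigl(1-\mathbf 1\{t\in S_\theta\}\bigr),\qquad
\mathbf 1\{t\in S_\theta\setminus S\} = \bigl(1-\mathbf 1\{t\in S\}\bigr)\mathbf 1\{t\in S_\theta\},
\]
substitute into $L_\gamma$, and interchange the order of integration. Since the posterior integral of $\mathbf 1\{t\in S_\theta\}$ is by definition \eqref{eq:alpha.posterior} equal to $\alpha(t|\mathcal D)$, a routine collection of terms will yield
\[
R(S) = (1-\gamma)\int_0^1 \alpha(t|\mathcal D)\deriv t \;+\; \int_0^1 \mathbf 1\{t\in S\}\bigl(\gamma-\alpha(t|\mathcal D)\bigr)\deriv t.
\]
Only the second summand depends on $S$, so minimizing $R(S)$ reduces to minimizing this single integral over measurable $S\subset\mathcal T$. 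The application of Fubini is justified because $L_\gamma(S,S_\theta)\le 1$ and $\pi_K(\cdot|\mathcal D)$ is a probability measure.

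The second step is the pointwise argument: the integrand equals $\gamma-\alpha(t|\mathcal D)$ on $S$ and $0$ off $S$, so any minimizer $\widehat S$ must include $t$ whenever $\gamma-\alpha(t|\mathcal D)<0$ and exclude $t$ whenever $\gamma-\alpha(t|\mathcal D)>0$, except possibly on a Lebesgue-null set (which does not affect the integral). Concretely, if $\widehat S\cap\{\alpha(t|\mathcal D)<\gamma\}$ had positive measure, removing this part from $\widehat S$ would strictly decrease $R(S)$, contradicting optimality; and symmetrically for $\{\alpha(t|\mathcal D)>\gamma\}\setminus\widehat S$. This delivers the two-sided inclusion
\[
\{t:\alpha(t|\mathcal D)>\gamma\}\subset \widehat S_\gamma(\mathcal D)\subset\{t:\alpha(t|\mathcal D)\ge\gamma\}
\]
up to Lebesgue-null sets, and in particular shows that the canonical level set $\{t:\alpha(t|\mathcal D)\ge\gamma\}$ attains the infimum.

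I do not expect a serious obstacle. The only subtle point is the boundary set $\{t:\alpha(t|\mathcal D)=\gamma\}$, which may carry positive Lebesgue measure; on it the integrand $(\gamma-\alpha)(t)$ vanishes so inclusion is immaterial, and this is exactly what forces the answer to be characterized by a sandwich of inclusions rather than by a unique set. Measurability of $\alpha(\cdot|\mathcal D)$, needed to ensure that the candidate level sets are measurable, follows from Fubini applied to the indicator $\mathbf 1\{t\in S_\theta\}$ seen as a jointly measurable function of $(t,\theta)$ via the parametrization \eqref{eq:intervalle}.
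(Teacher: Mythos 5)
Your proof is correct and follows essentially the same route as the paper's: apply Fubini/Tonelli to the posterior risk and then minimize pointwise in $t$, with the boundary set $\{\alpha(t|\mathcal D)=\gamma\}$ accounting for the non-uniqueness. The only difference is cosmetic — you isolate the $S$-dependence as $\int_S(\gamma-\alpha(t|\mathcal D))\deriv t$ plus a constant, whereas the paper phrases the pointwise step as choosing the minimum of the two branches $\gamma(1-\alpha)$ and $(1-\gamma)\alpha$ via a small lemma; both yield the same sandwich of inclusions.
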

The proof of the above theorem is given in Appendix \ref{sec:thm1}. Although simple-looking, the proof requires some caution because sets should be Borelian sets. Note that, when we try to avoid completely errors of type I (resp. type II) by setting $\gamma=0$ (resp. $\gamma=1$), the support estimate is $\mathcal T$ (resp. $\emptyset$). 
Additionally Theorem~\ref{thm:support} shows how we should interpret the posterior probability $\alpha(t|\mathcal D)$ and that its plot may be one important output of the Bayesian analysis proposed in this paper: it measures the evidence that a given point is in the support of the coefficient function. 
Finally, note that the number of intervals in the support estimate $\widehat{S}_\gamma(\mathcal
D)$ can, and is often different from the value of $K$ (because intervals can overlap).

\subsection{Estimation of the coefficient function}\label{sec:beta}
The Bayesian modelling given in Section~\ref{sec:model} was mainly designed to estimate
the support of the coefficient function. We can nevertheless provide Bayes estimates of
the coefficient function. We propose here two Bayes estimates of the coefficient
function. The first one, given in Equation~\eqref{eq:BayesL2} is a smooth estimate,
whereas the second estimate, given in Proposition~\ref{pro:3}, is a stepwise estimate
which is parsimonious and may be more easily interpreted.

With the default quadratic loss, a Bayes estimate is defined as
\begin{equation}
\widehat{\beta}_{L^2}(\cdot) \in \underset{d(\cdot) \in L^2(\mathcal T)}{\operatorname{arg\,min}} \iint \left(\beta_\theta(t) - d(t)\right)^2 \deriv t\ \pi_K(\theta|\mathcal D)\deriv \theta
\label{eq:BayesL2}
\end{equation}
where $\beta_\theta(t)$ is the coefficient function as parametrized in \eqref{eq:beta_forme}. 
At least heuristically $\widehat{\beta}_{L^2}(\cdot)$ is the average of $\beta_\theta(\cdot)$ over the posterior distribution $\pi_K(\theta|\mathcal D)$, though the average of functions taking values in $L^2(\mathcal T)$ under some probability distribution is hard to define (using either Bochner or Pettis integrals).
In this simple setting we can claim the following (see Appendix \ref{sec:proof1} for the proof). 
\begin{proposition}\label{pro:1}
  Let $\| \cdot \|$ be the norm of $L^2(\mathcal T)$. If
  $
  \int \left\|\beta_\theta(\cdot)\right\| \pi_K(\theta|\mathcal D)\deriv\theta < \infty,
  $
  then the estimate defined by 
  \begin{equation}\label{eq:betaL2}
    \widehat{\beta}_{L^2}(t) = \int \beta_\theta(t) \pi_K(\theta|\mathcal D)\deriv \theta, \quad\ t \in \mathcal T,
  \end{equation}
  is in $L^2(\mathcal T)$ and solves the optimization problem~\eqref{eq:BayesL2}.
\end{proposition}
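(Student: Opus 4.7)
The plan is to mimic the classical fact that the mean minimizes the expected squared error, transposed to the Hilbert space $L^2(\mathcal T)$ and the posterior $\pi_K(\theta|\mathcal D)$. First I would show that the pointwise formula $\widehat{\beta}_{L^2}(t)=\int \beta_\theta(t)\,\pi_K(\theta|\mathcal D)\deriv\theta$ defines an element of $L^2(\mathcal T)$, and then verify that the objective in \eqref{eq:BayesL2} decomposes into a $d$-independent term plus $\|\widehat{\beta}_{L^2}-d\|^2$, so that the minimum is attained at $d=\widehat{\beta}_{L^2}$.

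For well-definedness, since $\mathcal T$ has finite Lebesgue measure, Cauchy--Schwarz yields $\int_{\mathcal T}|\beta_\theta(t)|\,\deriv t \le |\mathcal T|^{1/2}\|\beta_\theta\|$; combining with the hypothesis and Tonelli's theorem gives $\iint|\beta_\theta(t)|\,\deriv t\,\pi_K(\theta|\mathcal D)\deriv\theta <\infty$, so the posterior mean is well-defined for almost every $t$. Minkowski's integral inequality then delivers $\|\widehat{\beta}_{L^2}\|\le \int\|\beta_\theta\|\,\pi_K(\theta|\mathcal D)\deriv\theta<\infty$, placing $\widehat{\beta}_{L^2}$ in $L^2(\mathcal T)$. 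For optimality, I would write $\beta_\theta(t)-d(t)=(\beta_\theta(t)-\widehat{\beta}_{L^2}(t))+(\widehat{\beta}_{L^2}(t)-d(t))$ and expand the square inside the double integral. The cross term
\[
2\int_{\mathcal T}(\widehat{\beta}_{L^2}(t)-d(t))\left[\int(\beta_\theta(t)-\widehat{\beta}_{L^2}(t))\,\pi_K(\theta|\mathcal D)\deriv\theta\right]\deriv t
\]
vanishes by the very definition of $\widehat{\beta}_{L^2}$, the pure $\theta$-term is independent of $d$, and the pure $d$-term collapses to $\|\widehat{\beta}_{L^2}-d\|^2\ge 0$. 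It follows that every $d\in L^2(\mathcal T)$ yields an objective at least as large as the one achieved by $\widehat{\beta}_{L^2}$.

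The only delicate step is the use of Fubini to move the inner expectation outside the $t$-integral in the cross term. If additionally $\int\|\beta_\theta\|^2\,\pi_K(\theta|\mathcal D)\deriv\theta<\infty$, absolute integrability on the product space follows from Cauchy--Schwarz pointwise in $t$ together with the hypothesis and $\widehat{\beta}_{L^2}\in L^2(\mathcal T)$, so Fubini applies directly. Otherwise, the bound $\|\beta_\theta-d\|^2\ge \|\beta_\theta\|^2-2\|d\|\|\beta_\theta\|$ together with the hypothesis shows that the objective of \eqref{eq:BayesL2} equals $+\infty$ for every $d\in L^2(\mathcal T)$, and $\widehat{\beta}_{L^2}$ is then trivially a minimizer. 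In either case the proposition is established.
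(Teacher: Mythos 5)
Your proof is correct, and it establishes both halves of the proposition, but the optimality step is organized differently from the paper's. The paper first applies Fubini to rewrite the objective as $\int_{\mathcal T}\int\left(\beta_\theta(t)-d(t)\right)^2\pi_K(\theta|\mathcal D)\deriv\theta\,\deriv t$ and then invokes, pointwise in $t$, the scalar fact that the posterior mean minimizes the posterior quadratic risk; you instead keep the double integral and perform the orthogonal (bias--variance) decomposition around $\widehat{\beta}_{L^2}$ globally in $L^2(\mathcal T)$, killing the cross term by the very definition of the posterior mean. The two routes are equivalent incarnations of ``the mean minimizes mean squared error,'' but yours is more explicit about the measure-theoretic fine print: you justify that the integral defining $\widehat{\beta}_{L^2}(t)$ converges for almost every $t$, and you isolate the degenerate case $\int\|\beta_\theta\|^2\pi_K(\theta|\mathcal D)\deriv\theta=\infty$, in which the objective in \eqref{eq:BayesL2} is identically $+\infty$ and the claim holds vacuously --- a case the paper's ``obviously'' silently absorbs (its pointwise scalar argument still goes through there, but this is never spelled out). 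The $L^2$-membership argument is the same in both proofs: your appeal to Minkowski's integral inequality is exactly the expand-and-Cauchy--Schwarz computation the paper writes out to obtain $\|\widehat{\beta}_{L^2}\|\le\int\|\beta_\theta\|\,\pi_K(\theta|\mathcal D)\deriv\theta$.
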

Averages such as \eqref{eq:betaL2} belong to the closure of the convex hull of the support
$\mathcal E_K$ of the posterior distribution. We can prove (see Proposition~\ref{pro:topo}
in Appendix~\ref{sec:proof3}) that the convex hull of $\mathcal E_K$ is the set
$\mathcal E = \cup_{K=1}^\infty \mathcal E_K$ of step functions on $\mathcal T$, and the
closure of $\mathcal E$ is $L^2(\mathcal T)$. Hence the only guarantee we have on
$\widehat{\beta}_{L^2}$ as defined in \eqref{eq:betaL2} is that $\widehat{\beta}_{L^2}$
lies in $L^2(\mathcal T)$, a much larger space than the set of step functions.  Though not
shown here, integrating the $\beta_\theta(t)$'s over $\theta$ with respect to the
posterior distribution has regularizing properties, and the Bayes estimate
$\widehat{\beta}_{L^2}(t)$ is smooth.


To obtain an estimate lying in the set of step functions, namely $\mathcal E$, we can consider
the projection of  $\widehat{\beta}_{L^2}$ onto the set $\mathcal E_{K_0}$ for a suitable
value of $K_0$ possibly different to $K$. However, due to the topological properties of $L^2(\mathcal T)$ and 
$\mathcal E_{K_0}$, the projection of
$\widehat{\beta}_{L^2}$ onto the set $\mathcal E_{K_0}$ does not always exist (see Appendix~\ref{sec:proof3}). To address
this problem, we introduce a subset $\mathcal E_{K_0}^{\eps}$ of $\mathcal
E_{K_0}$, where $\varepsilon>0$ is a tuning parameter. 
Let $\mathcal F^{\eps}$ denote the set of step functions $\beta(t)\in L^2(\mathcal T)$ which can be
written as
\[
\beta(t) = \sum \beta^\dag_k \mathds 1\{t\in J_k\}
\] 
where the intervals $J_k$'s are mutually disjoint and each of the lengths are greater than $\eps$.
The set $\mathcal E_{K_0}^{\eps}$ is now defined as $\mathcal F^{\eps} \cap \mathcal
E_{K_0}$. By considering this set, we remove from $\mathcal{E}_K$ the step functions which
have intervals of very small length, and we can prove the following. 
\begin{proposition} \label{pro:3}
  Let $K_0 \geq 1$ and $\varepsilon>0$. 
  \begin{itemize}
  \item[\it (i)] The function $d(\cdot)\mapsto
    \|d(\cdot)-\widehat{\beta}_{L^2}(\cdot)\|^2$ admits a minimum on $\mathcal
    E_{K_0}^{\eps}$. Thus a projection of $\widehat{\beta}_{L^2}(\cdot)$ onto this
    set,  defined by 
    \begin{equation}\label{eq:Bayes3}
      \widehat{\beta}_{K_0}^{\eps}(\cdot) \in 
      \underset{d(\cdot)\in\mathcal E_{K_0}^{\eps}}{\operatorname{arg\,min}}
      \ \|d(\cdot)-\widehat{\beta}_{L^2}(\cdot)\|^2, 
\end{equation} always exists.
  \item[\it (ii)]  The estimate $\widehat{\beta}_{K_0}^{\eps}(\cdot)$ is a true Bayes estimate with loss function
  \begin{equation}
  \label{eq:cout}
  L_{K_0}^{\eps} \big( d(\cdot), \beta(\cdot) \big)= \begin{cases}
    \left\| d(\cdot) - \beta(\cdot) \right\|^2= \int_{\mathcal T} \left(\beta(t)-d(t)\right)^2\deriv t
    & \text{if } \beta\in \mathcal E_{K_0}^{\eps},
    \\
    +\infty & \text{otherwise}.
    \end{cases}
\end{equation}
That is to say
\[
\widehat{\beta}_{K_0}^{\eps}(\cdot) \in \underset{d(\cdot)\in L^2(\mathcal T)}{\operatorname{arg\,min}}\ 
\int L_{K_0}^{\eps} \big( d(\cdot), \beta_\theta(\cdot) \big) \pi_K(\theta|\mathcal D)\deriv\theta.
\]
  \end{itemize}
\end{proposition}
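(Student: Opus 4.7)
My plan for (i) is a compactness-plus-continuity argument exploiting the finite-dimensional parametrization of $\mathcal E_{K_0}^\eps$. Write $F(d) = \|d - \widehat{\beta}_{L^2}\|^2$ and let $(d_n)$ be a minimizing sequence in $\mathcal E_{K_0}^\eps$. Each $d_n$ has the form $\sum_{i=1}^{k_n}\beta^\dag_{n,i}\mathds 1\{t\in J_{n,i}\}$ with $k_n\le K_0$, mutually disjoint intervals $J_{n,i}\subset\mathcal T$ of length at least $\eps$, and heights $\beta^\dag_{n,i}\in\mathbb R$. Since $F(d_n)$ is bounded, so is $\|d_n\|$, and because each interval has length at least $\eps$ the heights $\beta^\dag_{n,i}$ are bounded as well. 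Passing to a subsequence I may assume $k_n$ equals a constant $k^\ast\le K_0$, and that both the two endpoints of each $J_{n,i}$ and each height $\beta^\dag_{n,i}$ converge. The constraints $|J_{n,i}|\ge\eps$ and pairwise disjointness pass to the limit (limiting intervals may share endpoints, but this is immaterial for $L^2$ representatives), so the limit $d^\ast=\sum_{i=1}^{k^\ast}\beta^\dag_i\mathds 1\{t\in J_i\}$ lies in $\mathcal E_{K_0}^\eps$. Since $\mathds 1_{J_{n,i}}\to\mathds 1_{J_i}$ in $L^2$ as the symmetric differences shrink, we deduce $d_n\to d^\ast$ in $L^2$, and continuity of $F$ gives $F(d^\ast)=\lim F(d_n)=\inf_{\mathcal E_{K_0}^\eps}F$.

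For (ii) my plan is a straightforward bias--variance decomposition. Since $L_{K_0}^\eps$ takes the value $+\infty$ as soon as its first argument leaves $\mathcal E_{K_0}^\eps$, the posterior expected loss is $+\infty$ whenever $d\notin \mathcal E_{K_0}^\eps$; hence any minimizer over $L^2(\mathcal T)$ is forced into $\mathcal E_{K_0}^\eps$. For $d\in\mathcal E_{K_0}^\eps$, Fubini--Tonelli and Proposition~\ref{pro:1} (which identifies $\widehat{\beta}_{L^2}(t)$ with the posterior mean of $\beta_\theta(t)$) give
\[
\int \|d-\beta_\theta\|^2\,\pi_K(\theta|\mathcal D)\,\mathrm d\theta
=\int_{\mathcal T}\!\int\!\bigl(d(t)-\beta_\theta(t)\bigr)^2\pi_K(\theta|\mathcal D)\,\mathrm d\theta\,\mathrm dt
=\|d-\widehat{\beta}_{L^2}\|^2 + C,
\]
where $C=\int_{\mathcal T}\int(\beta_\theta(t)-\widehat{\beta}_{L^2}(t))^2\,\pi_K(\theta|\mathcal D)\,\mathrm d\theta\,\mathrm dt$ does not depend on $d$. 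Minimizing the posterior risk over $\mathcal E_{K_0}^\eps$ therefore reduces to minimizing $\|d-\widehat{\beta}_{L^2}\|^2$ over $\mathcal E_{K_0}^\eps$, whose solution is $\widehat{\beta}_{K_0}^\eps$ by part (i).

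The main obstacle I foresee lies in the limiting step of (i): one must verify that the bounds $|J_{n,i}|\ge\eps$ and pairwise disjointness genuinely survive the passage to the limit, and that the $L^2$-limit of the $d_n$ coincides with the $d^\ast$ built from the limiting parameters. Crucially, the lower bound $\eps$ prevents intervals from collapsing and simultaneously keeps the heights bounded above; merging of limiting intervals into endpoint-touching ones is harmless at the level of $L^2$ equivalence classes. For (ii), the only additional point to check is finiteness of the constant $C$, which follows from the $L^2$-integrability hypothesis already invoked in Proposition~\ref{pro:1}.
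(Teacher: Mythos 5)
Your proof of (i) is essentially the paper's own argument: take a minimizing sequence, use the fixed bound on the number of disjoint pieces, bound the heights via $\|d_n\|^2\ge\eps\sum_i(\beta^\dag_{n,i})^2$, and extract convergent subsequences of endpoints and heights to build a limit $d_\infty\in\mathcal E_{K_0}^{\eps}$ attaining the infimum. One small quibble: a function of $\mathcal E_{K_0}$ written in disjoint-interval form may require up to $2K_0-1$ pieces (overlapping $\mathcal I_k$'s merge and split), not $K_0$ as you state, but this is immaterial since any fixed bound on the number of pieces suffices for the subsequence extraction. For (ii), which the paper's appendix does not actually spell out, your Fubini plus bias--variance decomposition reducing the posterior risk to $\|d-\widehat{\beta}_{L^2}\|^2$ plus a $d$-free constant (finite by the integrability hypothesis of Proposition~\ref{pro:1}) is the intended and correct argument.
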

Finally one should note that the support of the Bliss estimate given
in Proposition~\ref{pro:3} provides another estimate of the support,
which differs from the Bayes estimate introduced in
Section~\ref{sec:support}. Obviously, real Bayes estimates, which
optimizes the loss integrated over the posterior distribution, are by
construction better estimates. Another possible alternative would be
the definition of an estimate of the coefficient function whose
support is given by one of the Bayes estimates defined in
Theorem~\ref{thm:support}. But such estimates do not account for the
inferential error regarding the support. Hence we believed that, when
it comes to estimating the coefficient function, the Bayes estimates
proposed in this Section are better than other candidates and achieve
a trade off between inferential errors on its support and prediction
accuracy on new data.
\subsection{Implementation}
\label{sec:implementation}
The full posterior distribution can be written explicitly from the Bayesian model given in Equations~\eqref{eq:model_complet}.
As usual with hierarchical models, sampling from the posterior distribution $\pi_K(\theta|\mathcal D)$ can be done with a Gibbs algorithm \citep[see, e.g.,][Chapter 7]{robert2013monte}. The details of the MCMC algorithm are given in Appendix~\ref{app:conditional}.

Now let $\theta(s)$, $s=1, \ldots, N$, denote the output of the MCMC sampler after the
burn-in period.  The computation of the Bayes estimate $\widehat{S}_\gamma(\mathcal D)$ of
the support as defined in Theorem~\ref{thm:support} depends on the probabilities
$\alpha(t|\mathcal{D})$. With the Monte Carlo sample from the MCMC, we can easily
approximate these posterior probabilities by the frequencies
\[
	{\alpha}(t |\mathcal{D}) \approx \frac{1}{N} \sum_{s=1}^N\mathds{1}\{\beta_{\theta(s)}(t) \neq 0 \}. 
\]

What remains to be computed are the approximations of $\widehat{\beta}_{L^2}(\cdot)$ and
$\widehat{\beta}_{K_0}^{\eps}(\cdot)$ based on the MCMC sample. First, the Monte
Carlo approximation of \eqref{eq:betaL2} is given by 
\[
\widehat{\beta}_{L^2}(t) \approx \frac1N \sum_{s=1}^N \beta_{\theta(s)}(t).
\]
And the more interesting Bayes estimate $\widehat{\beta}_{K_0}^{\eps}(\cdot)$ can be computed by
minimizing 
\[
\left\| d(\cdot) - \frac1N \sum_{s=1}^N \beta_{\theta(s)}(\cdot)  \right\|^2
\]
over the set $\mathcal E_{K_0}^{\eps}$. To this end we run a Simulated annealing
algorithm \citep{Kirkpatrick1983}, described in Appendix \ref{ann:SANN}.

We also provide a striking graphical display of the posterior distribution on the set
$\mathcal E_K$ with a heat map. More precisely, the aim is to sketch all marginal
posterior distributions $\pi^t_K(\cdot|\mathcal D)$ of $\beta_\theta(t)$ for any value of
$t\in\mathcal T$ in one single figure. To this end we introduce the probability measure
$Q$ on $\mathcal T\times\mathbb R$ defined as follows. Its marginal distribution over
$\mathcal T$ is uniform, and given the value $t$ of the first coordinate, the second
coordinate is distributed according to the posterior distribution of $\beta(t)$. In other
words,
\[
	(t,b)\sim Q \quad \iff \quad t\sim \text{Unif}(\mathcal T),\ b|t\sim
    \pi^t_K(\cdot|\mathcal D).
\] 
We can easily derive an empirical approximation of $Q$ from the MCMC sample $\{\theta(s)\}$ of the posterior. Indeed, the first marginal distribution of $Q$, namely $\text{Unif}(\mathcal T)$ can be approximated by a regular grid $t_i$, $i=1,\ldots,M$. And, for each value of $i$, set $b_{is}=\beta_{\theta(s)}(t_i)$, $s=1,\ldots, N$. The resulting empirical measure is
\[
	\widehat{Q} = \frac{1}{M\,N}\sum_{i=1,\ldots,M}\sum_{j=1,\ldots,N}\delta_{(t_i, b_{is})},
\]
where $\delta_{(t,b)}$ is the Dirac measure at $(t,b)$. The graphical
display we propose is representing $\widehat{Q}$ with a heat map on
$\mathcal T \times \mathbb R$. Each small area of $\mathcal T \times
\mathbb R$ is thus colored according to its $\widehat
Q$-probability. This should be done cautiously as the marginal
posterior distribution $\pi^t_K(\cdot|\mathcal D)$ has a point mass at
zero: $\pi^t_K(b=0|\mathcal D)>0$ by construction of the prior
distribution. Finally the color scale can be any monotone function of the
probabilities, in particular non linear functions to handle the atom
at $0$. Examples are provided in Section~\ref{sec:simulation}
in Figures~\ref{fig:estim_coeff} and \ref{fig:estim_coeff2}.

\section{Simulation study}
\label{sec:simulation}
In this section, the performance of Bliss is evaluated and compared to three competitors:
FDA \citep{Ramsay2005}, Fused lasso \citep{Tibshirani2005} and Flirti \citep{James2009},
using simulated datasets. 
\begin{figure}	
	\centering
	\includegraphics[width=7cm]{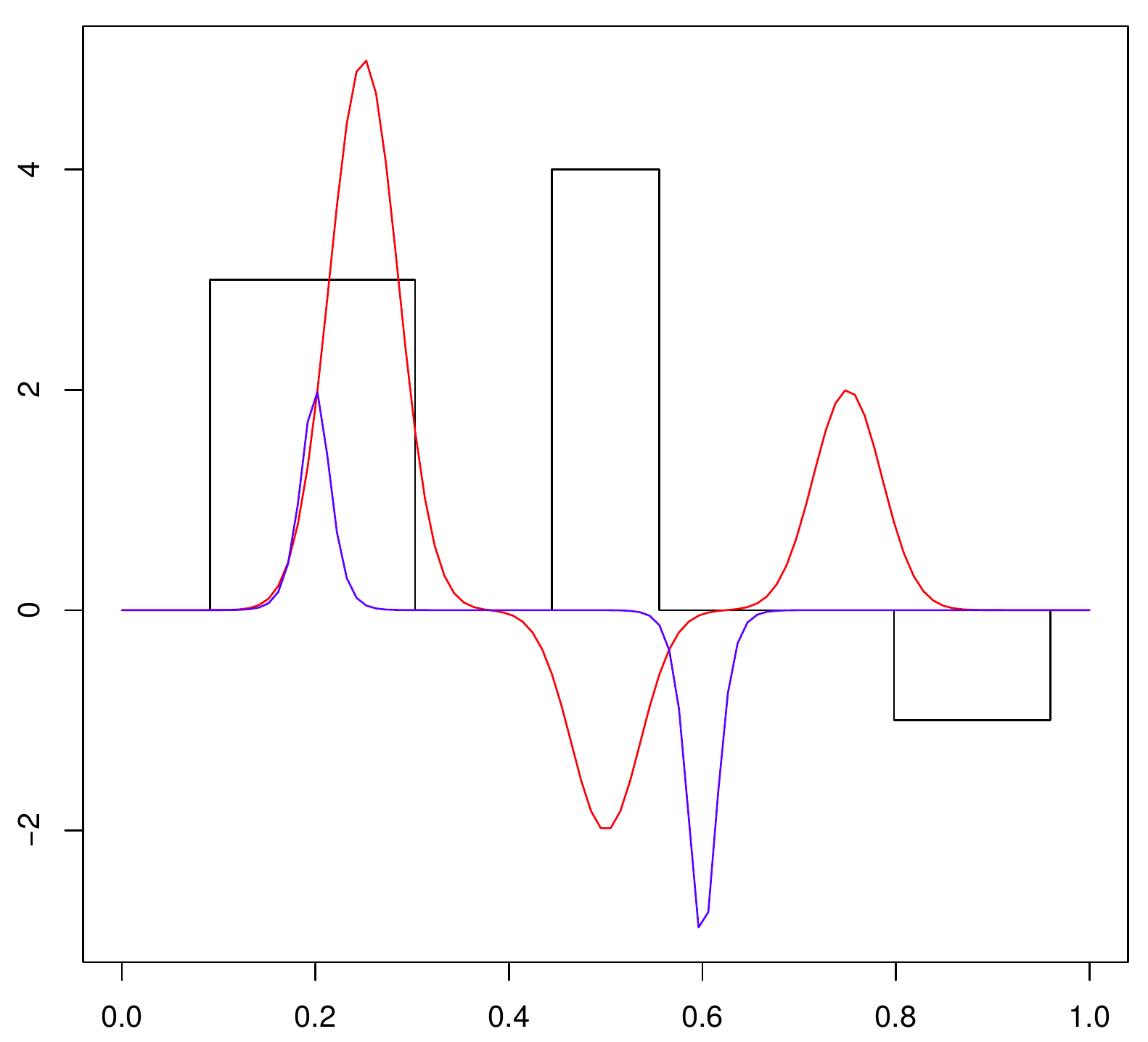} \\
	\caption{{Coefficient functions for numerical illustrations.} \it The black (resp. red and blue)
          curve corresponds to the coefficient function of Shape~1 (resp. 2 and 3).}
	\label{fig:fonction_coef}
\end{figure}

\begin{figure}[t]
  \centering
  \begin{tabular}{cc}
    \includegraphics[width=.45\textwidth]{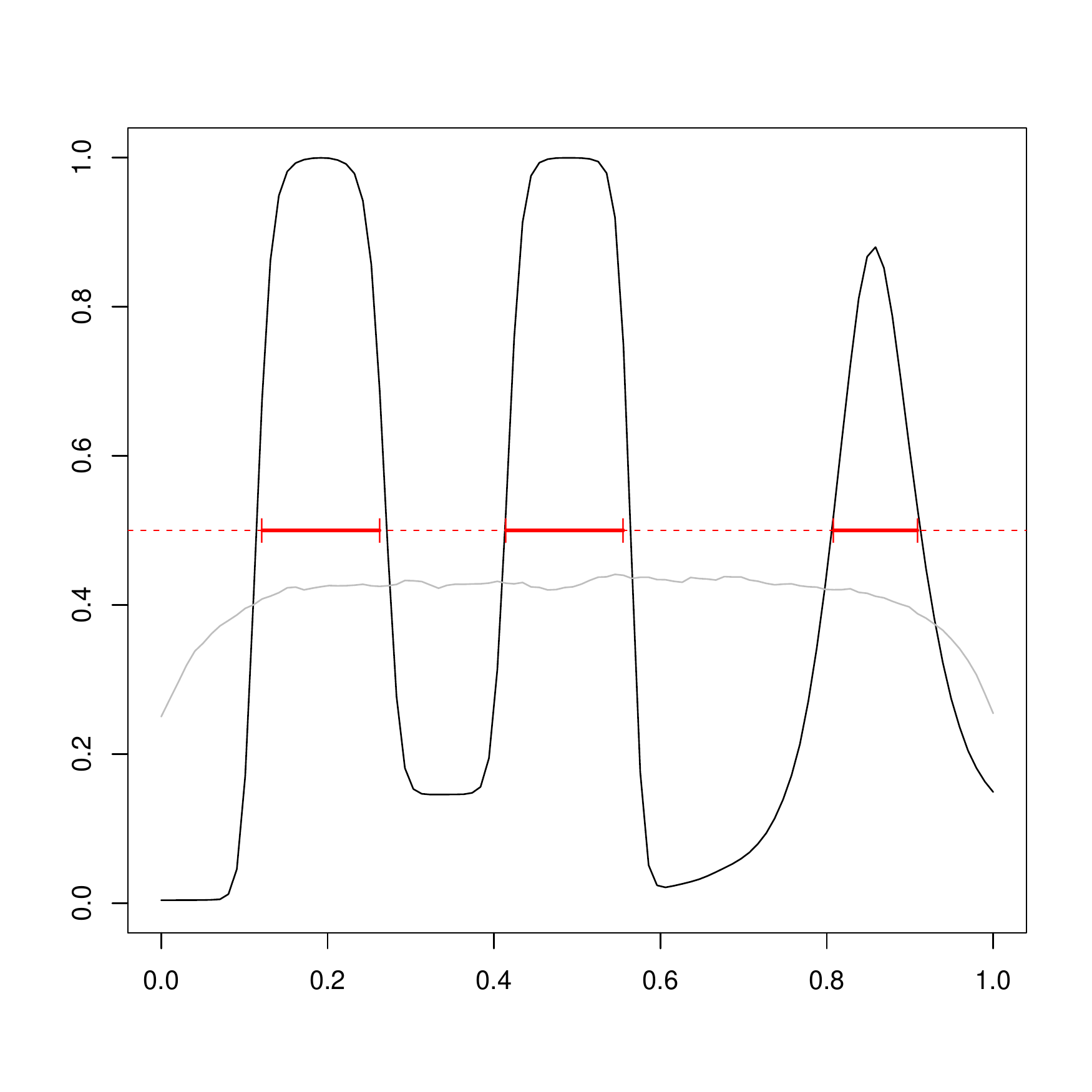}
    &\includegraphics[width=.45\textwidth]{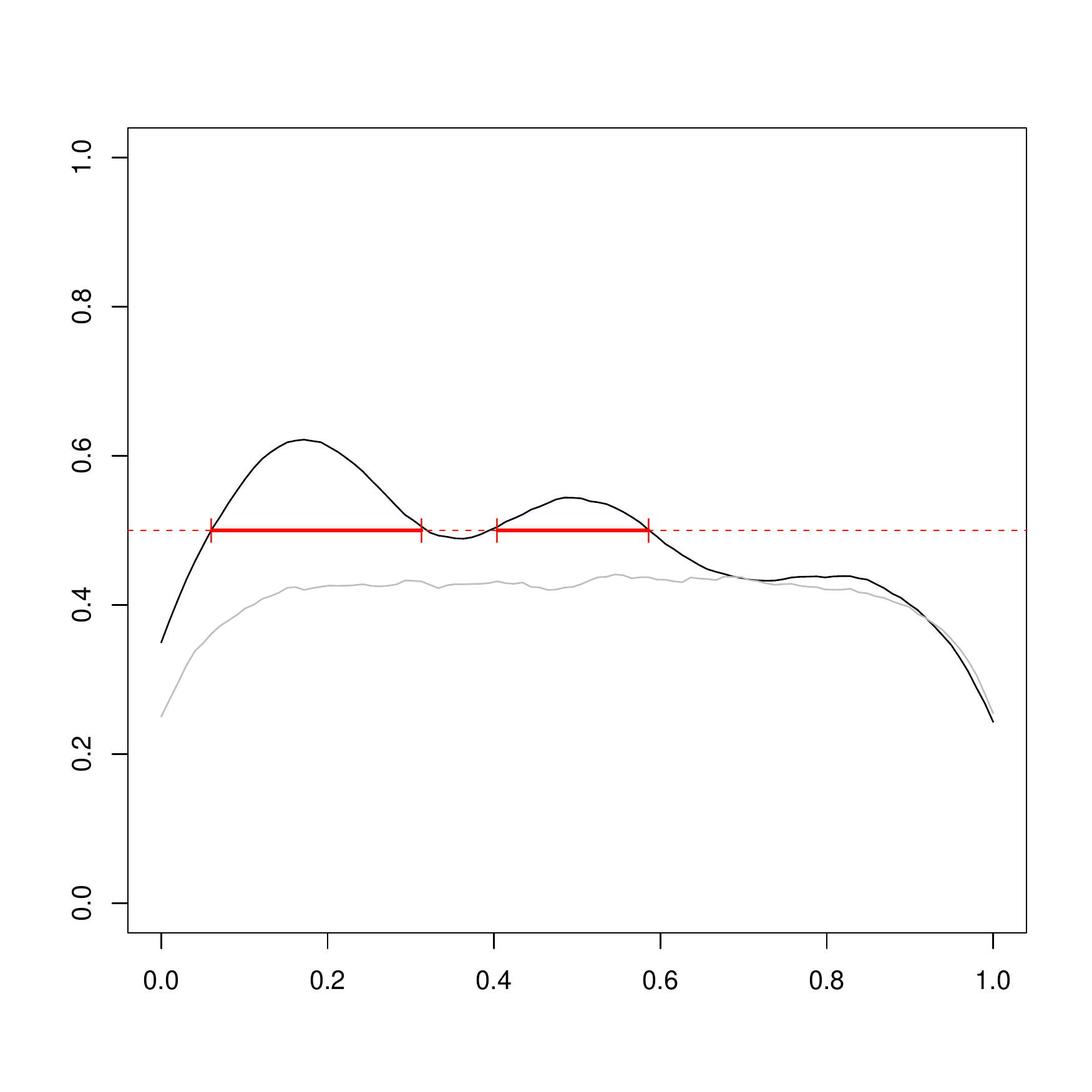}
      \\
    Dataset 1 ($r=5$, $\zeta=1$) & Dataset 3 ($r=5$, $\zeta=1/5$)
  \end{tabular}
  \caption{Prior (in gray) and posterior (in black) probabilities of being in the support computed on Datasets 
    1 and 2. Bayes estimate of support using Theorem~\ref{thm:support}
  with $\gamma=1/2$ are given in red.}\label{fig:support}
\end{figure}

\subsection{Simulation scheme}
\label{sec:scheme}

First of all, we describe how we generate different datasets on which we applied and compared the methods. The support of the covariate curves $x_i$ is $\mathcal{T} = [0,1]$, observed on a regular grid $( t_j )_{j=1,\dots,p}$ on $\mathcal{T}$, for $p=100$. 
We simulate $p$-multivariate Gaussian vectors $x_i$, $i=1,\dots,100$, corresponding to the values of curves $x_i$ for the observation times $(t_j)_j$. The covariance matrix $\Sigma$ of these Gaussian vectors is given by 
$$ \Sigma_{i,j} =\sqrt{\Sigma_{i,i}\Sigma_{j,j}} \exp \left( - \zeta^2 (t_i - t_j)^2  \right), \hspace{0.5cm} \text{ for $i$ and $j$ from 1 to $p$}, $$
where the coefficient $\zeta$ tunes the autocorrelation of the $x_i(t)$. Three different shapes are considered for the functional coefficient $\beta$, given in Figure~\ref{fig:fonction_coef}. 

The first one is a step function, the second one is smooth and is null on small intervals of $\mathcal{T}$ (Smooth), the third one is nonnull only on small intervals of $\mathcal{T}$ (Spiky).
\begin{itemize}
	\itemb Step function: $\beta(t) = 3 \, \mathbf{1}{\{ t \in [0.1,0.3]\}} + 4 \, \mathbf{1}\{ t \in [0.45,0.55]\} -  \mathbf{1}\{ t \in [0.8,0.95]\}$.
	\itemb Smooth: $\beta(t) = 5\times e^{-20(t-0.25)^2} - 2\times e^{ -20(t-0.5)^2 } +2\times e^{-20(t-0.75)^2}$.
	\itemb Spiky: $\beta(t) = 8\times \big( 2 + e^{20-100t} + e^{100t-20}\big)^{-1} - 12\times \big( 2 + e^{60-100t} + e^{100t-60}\big)^{-1} $.
\end{itemize}
The outcomes $y_i$ are calculated according to \eqref{eq:model_bayes_1} with an additional noise following a centred Gaussian distribution with variance $\sigma^2$. The value of $\sigma^2$ is fixed such that the signal to noise ratio is equal to a chosen value $r$.
Datasets are simulated for $\mu=1$ and for the following different values of $\zeta$ and $r$:
\begin{itemize}
	\itemb $\zeta = 1, {1}/{3}, {1}/{5}$, 
	\itemb $r=1,3,5.$
\end{itemize}
Hence, we simulate 27 datasets with different characteristics, that we use in Section~\ref{sec:comparaison} to compare the methods.

\subsection{Performances regarding support estimates}

\begin{table}[!h]
	\caption{Comparison of the support estimate and the support of the Bliss estimate.}
	\begin{center}
	{\scriptsize
	\begin{tabular}{cccccc}
	\hline		
	\multicolumn{3}{c}{} & \multicolumn{2}{c}{Support Error} & Dataset \\
	\hline
	Shape & $r$ & $\zeta$ & Support of the stepwise estimate & Bayes support estimate &  \\
	\hline 
        \multirow{9}{*}{Step function} & 
		$5$ & $1$   & 0.242 & 0.152 & 1 \\ &
		$5$ & $1/3$ & 0.384 & 0.202 & 2 \\ &
		$5$ & $1/5$ & 0.242 & 0.293 & 3 \\ \cdashline{2-6} &
		$3$ & $1$ & 0.232 & 0.091 & 4 \\ &
		$3$ & $1/3$ & 0.323 & 0.394 & 5 \\ &
		$3$ & $1/5$ & 0.424 & 0.465 & 6 \\ \cdashline{2-6} &
		$1$ & $1$ & 0.283 & 0.162 & 7 \\ &
		$1$ & $1/3$ & 0.404 & 0.333 & 8 \\ &
		$1$ & $1/5$ & 0.439 & 0.394 & 9 \\ \cdashline{2-6} \hline
	\end{tabular}
	}
	\\ \vspace{0.2cm}
        \it Section~\ref{sec:scheme} describes the simulation scheme of the datasets.
        Section~\ref{sec:criteria} describes the criteria: Support Error. 
	\end{center}
	\label{tab:support_error}
\end{table}
We begin by assessing the performances of our proposal in term of support recovery. We
focus here on the datasets simulated with the step function as the true coefficient
function. It is the only function among the three functions we have chosen where the real
definition of the support matches with the answer a statistician would expect, see
Figure~\ref{fig:model}. 
The numerical results are given in Table~\ref{tab:support_error}, where we evaluated the
error with the Lebesgue measure of the symmetric difference between the true support $S_0$ and the
estimated one $\widehat{S}$, that is to say $2L_{1/2}(\widehat{S}, S_0)$ with the notation
of Section~\ref{sec:support}.

As we claim at the end of Section~\ref{sec:beta}, the Bayes estimates we have defined in
Theorem~\ref{thm:support} performs much better than relying on the support of a
stepwise estimate of the coefficient function. As also expected the accuracy of the Bayes
support estimate worsens when the autocorrelation within the functional covariate $x_i(t)$
increases. The signal to noise ratio is the second most influent factor that explains the
accuracy of the estimate.

The third interval of the true support, namely $[0.8,0.95]$, is the most difficult to
recover because the true value of the coefficient function over this interval is relatively low
($-1$) compared to the other values ($4$ and $3$) of the coefficient function.
Figure~\ref{fig:support} gives two examples of the posterior probability function
$\alpha(t|\mathcal D)$ defined in Eq.~\eqref{eq:alpha.posterior} where we have highlighted (in red)
the Bayes support estimate with $\gamma=1/2$.  Among these two examples,
the Figure shows that the third interval is recovered
only when there is low autocorrelation in $x_i(t)$ (i.e. Dataset 1). 
Figure~\ref{fig:support} exhibits that the support estimate of
Dataset~1 (low autocorrelation within the covariate) is more
trustworthy than the support estimate of Dataset~3 (high
autocorrelation within the covariate).

For more complex coefficient functions, see Figure~\ref{fig:fonction_coef}, we cannot
compare directly the Bayes support estimate with the true support of the coefficient function that
generated the data. Nevertheless, in the next section, we will compare
the coefficient estimate with the true value of the coefficient function. 

\subsection{Performances regarding the coefficient function}\label{sec:criteria}\label{sec:comparaison}
\begin{figure}
  \centering
  \begin{tabular}{cc}
    \includegraphics[width=.49\textwidth]{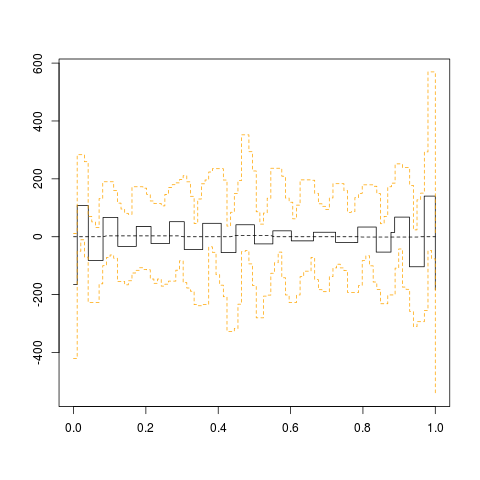}
    &
      \includegraphics[width=.49\textwidth]{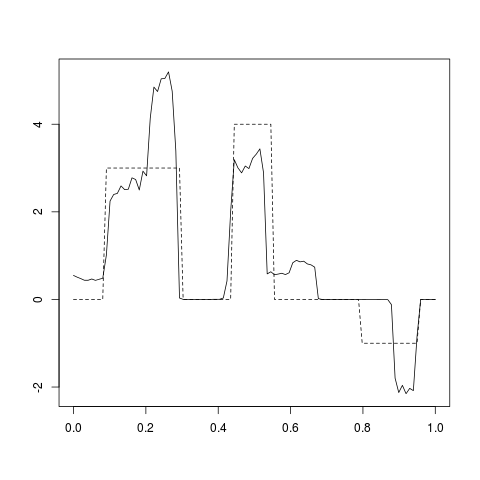}
      \\
    Flirti & Fused Lasso \\
    \\
    \includegraphics[width=.49\textwidth]{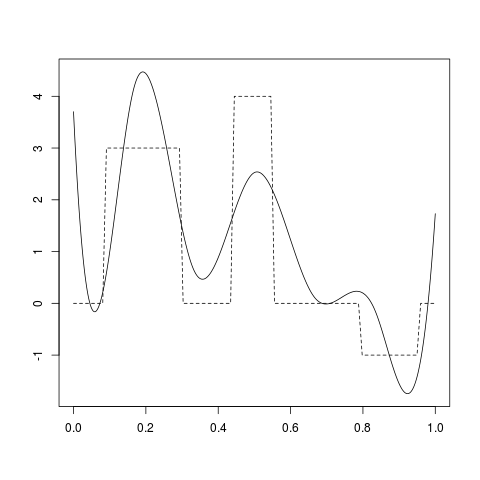}
    &
      \includegraphics[width=.49\textwidth]{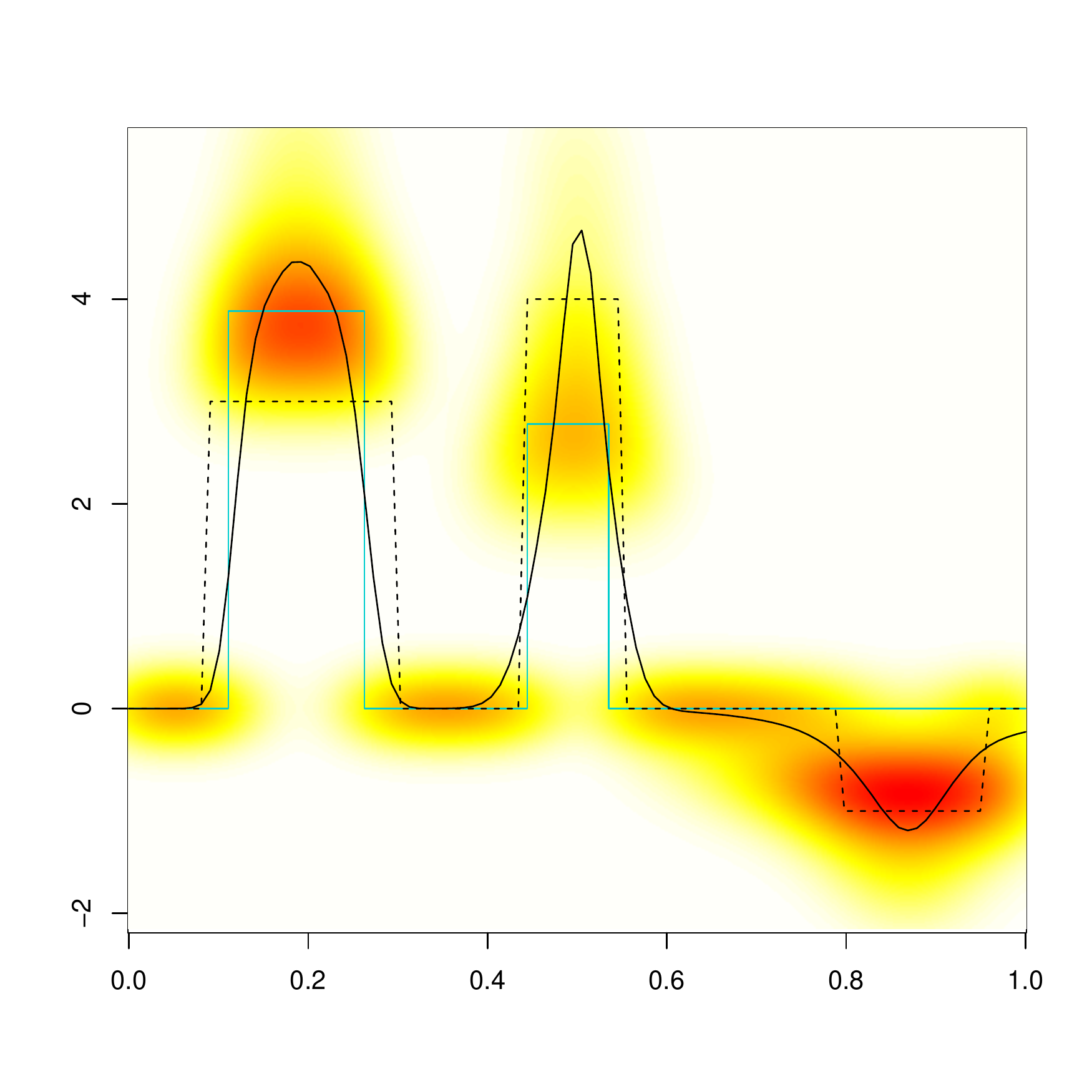}
      \\
    FDA & Bliss
  \end{tabular}
  \caption{Estimates of the coefficient function on Dataset 4 ($r=3$, $\zeta=1$)}
  \label{fig:estim_coeff}
\end{figure}

\begin{figure}
  \centering
  \begin{tabular}{cc}
    \includegraphics[width=.49\textwidth]{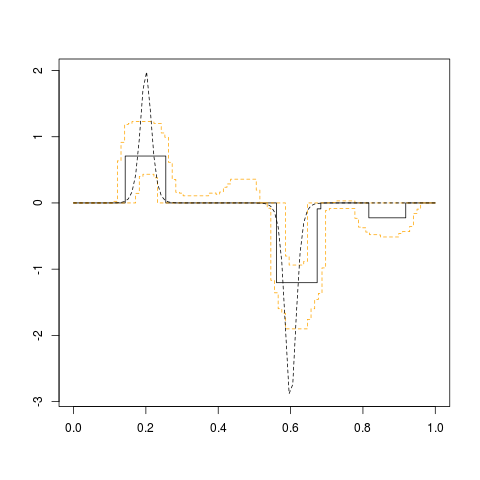}
    &
      \includegraphics[width=.49\textwidth]{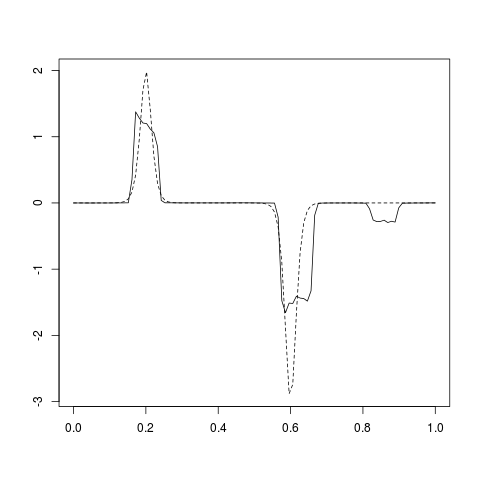}
      \\
    Flirti & Fused Lasso \\
    \\
    \includegraphics[width=.49\textwidth]{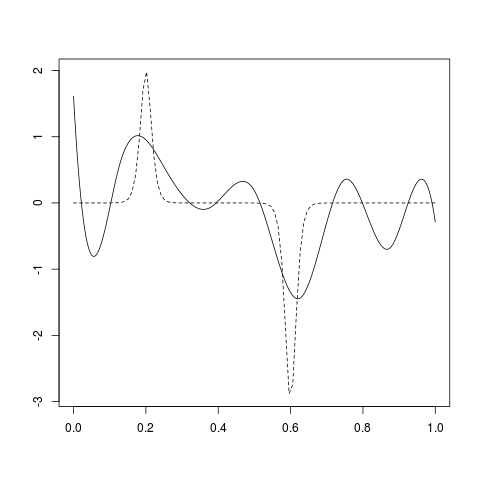}
    &
      \includegraphics[width=.49\textwidth]{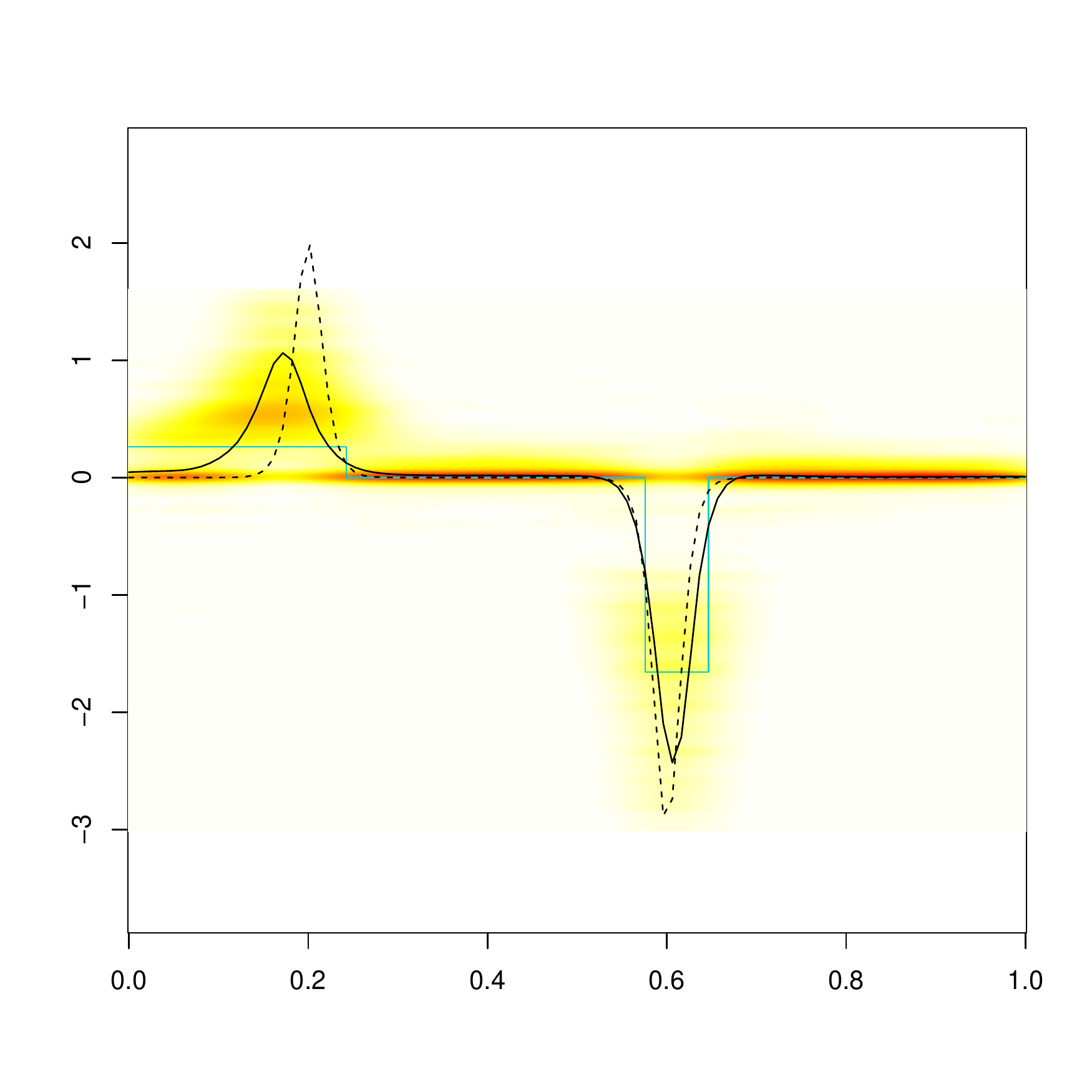}
      \\
    FDA & Bliss
  \end{tabular}
  \caption{Estimates of the coefficient function on Dataset 25 ($r=1$, $\zeta=1$))}
  \label{fig:estim_coeff2}
\end{figure}

In order to compare the methods for the estimation of the coefficient function, we use the
$L^2$-loss, namely
\[\int_0^1 ( \widehat{\beta}(t) - \beta_0(t))^2 \deriv t \]
where $\widehat{\beta}(t)$ is an estimate we compare to the {true} coefficient function $\beta_0(t)$.
Table~\ref{tab:estimate_error} shows the results of Bliss and its
competitors on these simulated datasets.  It appears that the
numerical results of the three methods have the same order of
magnitude. Although the three methods may have different accuracy,
depending on the shape of the coefficient function that generated the
dataset.

Regarding Fused Lasso, we can see in Table~\ref{tab:estimate_error} that its accuracy worsens when the
problem is not sparse, that is to say when the true function is the
``smooth'' function (the red curve of Figure~\ref{fig:fonction_coef}).  Next, we observe
that Flirti is very sensitive. Its numerical results can be sometimes rather accurate, but
sometimes the $L^2$-error can blow up (to exceed $100$) because the method did not manage
to tune its parameters.  The $L^2$-Bliss estimate defined in
Proposition~\ref{pro:1} frequently overperforms the other methods. This
first conclusion is not surprising because the $L^2$-Bliss estimate has been defined to
optimize the $L^2$-loss integrated over the posterior distribution. 

Even in situations where the true function is stepwise, the stepwise Bliss estimate of
Proposition~\ref{pro:3} is less accurate than the $L^2$-Bliss estimate, except for two
examples (datasets 6 and 9). Nevertheless we do argue that the stepwise Bliss estimate was built to provide a
trade off between accuracy regarding the support estimate and accuracy regarding the
coefficient function estimate. Thus the stepwise estimate is a
balance between support estimate and coefficient function estimate
that can help the statistician who can then get an
interpretation of the underlying phenomena that generated the data. In
other words, the stepwise-Bliss estimate is not the best neither at
estimating the support nor at approximating the coefficient function,
but provides a tradeoff.

To show more detailed results we have presented the estimate of the coefficient
function in two cases.
\begin{itemize}
\item Figure~\ref{fig:estim_coeff} displays the numerical results on
  Dataset 4 (medium level of signal, low level of autocorrelation with
  the covariates). As can
  be expected when the true coefficient is a stepwise function, the stepwise
  Bliss estimate behaves nicely. The representation of the marginals of the posterior
  distribution with a heat map shows the confidence we can have in the
  Bayes estimate of the coefficient function. The
  smooth estimate nicely follows the regions of high posterior density.
  Here, the stepwise estimate clearly highlights two time periods 
  (the first two intervals of the true support) and the sign of the coefficient
  function on these intervals. We can compare our proposal with its competitors. Flirti did not manage to tune its
  own parameters, and the Flirti estimate is completely
  irrelevant. Fused Lasso on a discretized version of the functional
  covariate provides a relatively nice estimate of the coefficient
  function. And FDA is not that bad, although the estimate is clearly
  too smooth to match the true coefficient function.
\item Figure~\ref{fig:estim_coeff2} displays the numerical results on
  Dataset 25 (low level of signal, and low level of autocorrelation
  within the covariates). In this example, the true coefficient is not
  stepwise, but smooth, and is around zero on large time periods. The $L^2$-Bliss estimate of Proposition~\ref{pro:1} matches
  approximately the true coefficient function. The stepwise-Bliss
  estimate is a little bit poorer (maybe because of the difficult
  calibration of the simulated
  annealing algorithm). When comparing these results with other
  estimates on this dataset, we see that Flirti and Fused Lasso
  performed  also decently, even if they both highlight a third
  time period (around $t=0.85$) where they infer a negative
  coefficient function instead of $0$. Flirti is at its best here, and
  has obviously managed to tune its own parameters in a relevant way.
  The confidence bands of Flirti are then
  reliable, but we stress here that they are relatively wide around
   periods where
  the Flirti estimate is null and does not reflect high confidence in any support estimate
  based on Flirti. 
  Finally, the comments on FDA are the same as Dataset 4, the FDA
  estimate is clearly too smoothed to match the true coefficient
  function. 
\end{itemize}


\begin{table}
	\caption{{Numerical results of Bliss, Flirti, Fused lasso and FDA on
      the Simulated Datasets. 
    }}
	\begin{center}
	{\scriptsize
	\begin{tabular}{ccccccccc}
	\hline		
	\multicolumn{3}{c}{} & \multicolumn{5}{c}{\bf$L^2$-error} & \bf Dataset \\
	Shape & $r$ & $\zeta$ & stepwise Bliss & $L^2$-Bliss & Fused lasso & Flirti & FDA &  \\
	\hline 
	\multirow{9}{*}{Step function} & 
		$5$ & $1$   & 1.126 & 0.740 & \textbf{0.666} & 1.288  & 1.514 &1 \\ &
		$5$ & $1/3$ & 2.221 & \textbf{1.415} & 1.947 & 1.781  & 1.997 & 2 \\ &
		$5$ & $1/5$ & 2.585 & \textbf{1.656} & 1.777 & 3.848  & 1.739 & 3 \\ \cdashline{2-9} &		
		$3$ & $1$   & 1.283 & \textbf{0.821} & 0.984 & $10^3$ & 1.203 & 4 \\ &		
		$3$ & $1/3$ & 1.531 & \textbf{1.331} & 1.936 & $10^4$ & 1.830 & 5 \\ &		
		$3$ & $1/5$ & 2.266 & 2.989 & 2.036 & \textbf{1.772}  & 2.144 & 6 \\ \cdashline{2-9} &		
		$1$ & $1$   & 1.589 & \textbf{0.747} & 0.995 & 3.848  & 1.577 & 7 \\ &		
		$1$ & $1/3$ & 2.229 & \textbf{1.817} & 2.214 & $10^4$ & 2.307 & 8 \\ &		
		$1$ & $1/5$ & \textbf{1.945} & 2.364 & 2.028 & 3.848  & 4.437  & 9 \\ \cdashline{2-9}
    \multirow{9}{*}{Smooth} & 
		$5$ & $1$   & 0.510 & \textbf{0.134} & 0.601 & 0.166  & 0.573 & 10 \\ &		
		$5$ & $1/3$ & 0.807 & 0.609 & \textbf{0.442} & 2.068  & 1.103 & 11 \\ &		
		$5$ & $1/5$ & 1.484 & \textbf{1.352} & 2.325 & 2.068  & 1.650 & 12 \\ \cdashline{2-9} &		
		$3$ & $1$   & 0.776 & 0.416 & 0.320 & \textbf{0.263}  & 3.295 & 13 \\ &		
		$3$ & $1/3$ & \textbf{0.855} & 0.954 & 6.790 & 2.068  & 1.819 & 14 \\ &		
		$3$ & $1/5$ & 1.291 & \textbf{1.162} & 1.742 & 1.328  & 1.759 & 15 \\ \cdashline{2-9} &		
		$1$ & $1$   & 0.932 & 0.641 & 0.652 & 2.335  & \textbf{0.616} & 16 \\ &		
		$1$ & $1/3$ & 0.719 & \textbf{0.283} & 0.613 & $10^4$ & 1.308 & 17 \\ &		
		$1$ & $1/5$ & 1.536 & \textbf{1.006} & 4.680 & 5.430  & 2.985 & 18 \\ \cdashline{2-9}
    \multirow{9}{*}{Spiky} & 
		$5$ & $1$   & 0.099 & \textbf{0.013} & 0.059 & 0.035  & 0.239 & 19 \\ &		
		$5$ & $1/3$ & 0.208 & \textbf{0.144} & 0.260 & 0.271  & 0.349 & 20 \\ &		
		$5$ & $1/5$ & 0.285 & 0.251 & \textbf{0.181} & 0.226  & 0.306 & 21 \\ \cdashline{2-9} &		
		$3$ & $1$ & 0.187 & \textbf{0.023} & 0.638 & 0.136  & 0.584 & 22 \\ &		
		$3$ & $1/3$ & 0.257 & 0.202 & \textbf{0.159} & 0.277  & 0.258 & 23 \\ &		
		$3$ & $1/5$ & 0.269 & \textbf{0.260} & 0.459 & 0.276  & 1.050 & 24 \\ \cdashline{2-9} &		
		$1$ & $1$ & 0.144 & \textbf{0.087} & 0.123 & 0.166  & 0.270 & 25 \\ &		
		$1$ & $1/3$ & 0.242 & \textbf{0.223} & 0.260 & $10^2$ & 0.270 & 26 \\ &		
		$1$ & $1/5$ & 0.273 & 0.279 & \textbf{0.221} & 0.301  & 0.405 & 27 \\ \cdashline{2-9}
	\end{tabular}
	}\end{center}
        \flushleft \footnotesize
        Section~\ref{sec:scheme} describes the simulation scheme of the datasets.
        The stepwise Bliss estimate is the estimate defined in Proposition~\ref{pro:3},
        while the $L^2$-estimate is the smooth estimate defined in Proposition~\ref{pro:1}.
	\label{tab:estimate_error}
\end{table}

\subsection{Tuning the hyperparameters}
\label{choixhyperparametres}	\label{sec:sensitive}

We can now discuss our recommandation on the hyperparameters of the
model, given at the end of Section~\ref{sec:model}.
For this study, we applied our methodology on Dataset~1 and fixed the hyperparameters
$v_0$, $v$, $a$ around the recommended values.
We recall that Dataset~1 is a synthetic dataset simulated with a coefficient
function that is a step function (the black curve of
Figure~\ref{fig:fonction_coef}), with a high level of signal over noise ($r=5$)
and with a low level of autocorrelation within the covariates($\zeta=1$). The following values
are considered for each hyperparameter:
\begin{itemize}
	\item for $a$: $0.5 / K$, $0.2 / K$, $0.1 / K$, $0.07 / K$ and $0.05 / K$;
	\item for $v$: 10, 5, 2, 1 and 0.5;
        \item and for $K$: any integer between $1$ and $10$.
\end{itemize}
The numerical results are given in Table~\ref{tab:sensitivity}. The default values we recommend are
not the best values here, but we have done numerous other trials on
many synthetic datasets and these choices are
relatively robust. We do not highlight any particular value for $K$ since this value can (and
should) be chosen with the Bayesian model choice machinery.

\begin{table}[!h]
	\caption{{Performances of Bliss with respect
            to the tuning of the hyperparameters.} }	
	\begin{center}
	{\scriptsize
	\begin{tabular}{ccccc}
	\hline		
	\multicolumn{1}{c}{} & \multicolumn{2}{c}{Error on the $\beta$} 
          & \multicolumn{2}{c}{ Error on the support} \\
	\hline
	 & stepwise-Bliss & $L^2$-Bliss & Support of the stepwise estimate & Bayes support estimate    \\
	\hline 
		$a = 0.5 / K$              & 1.000 & 0.698 & 0.222 & 0.439  \\ 
		$a = 0.2 / K$ $\heartsuit$ & 1.013 & 1.135 & 0.222 & 0.192  \\ 
		$a = 0.1 / K$              & 1.642 & 1.364 & 0.242 & 0.202 \\ 
		$a = 0.07 / K$             & 3.060 & 1.645 & 0.364 & 0.212  \\ 
		$a = 0.05 / K$             & 2.032 & 1.888 & 0.263 & 0.263  \\ \cdashline{1-5} 
		$v = 10$             & 1.628 & 1.125 & 0.242 & 0.192  \\ 
		$v = 5$ $\heartsuit$ & 1.711 & 1.131 & 0.242 & 0.192  \\   		                         
		$v = 2$              & 1.082 & 1.143 & 0.273 & 0.192  \\ 
		$v = 1$              & 1.207 & 1.119 & 0.273 & 0.192  \\ 
		$v = 0.5$            & 1.675 & 1.129 & 0.263 & 0.192  \\ \cdashline{1-5}  
		$K = 1$               & 1.798 & 1.782 & 0.424 & 0.449 \\ 
		$K = 2$               & 0.993 & 1.101 & 0.222 & 0.222 \\ 
		$K = 3$               & 1.696 & 1.124 & 0.242 & 0.192 \\ 		                         
		$K = 4$               & 1.736 & 1.159 & 0.283 & 0.172 \\ 
		$K = 5$               & 2.081 & 1.233 & 0.303 & 0.172 \\ 
		$K = 6$               & 2.177 & 1.243 & 0.283 & 0.202 \\ 
		$K = 7$               & 2.135 & 1.221 & 0.303 & 0.232 \\ 
		$K = 8$               & 1.343 & 1.184 & 0.263 & 0.242 \\ 
		$K = 9$               & 1.439 & 1.166 & 0.263 & 0.328 \\ 
		$K = 10$              & 1.897 & 1.089 & 0.364 & 0.348 \\ 	 
	 \hline
	\end{tabular}	
	}
	\\ \vspace{0.2cm}
        \it  The $\heartsuit$ symbol indicates the default values.
	\end{center}
	\label{tab:sensitivity}
\end{table}

\section{Application to the black P\'erigord truffle dataset} 
\label{sec:reel}
We apply the Bliss method on a dataset to predict the amount of production of black truffles given the rainfall curves.
\begin{figure}	
	\centering
	\includegraphics[height=6cm]{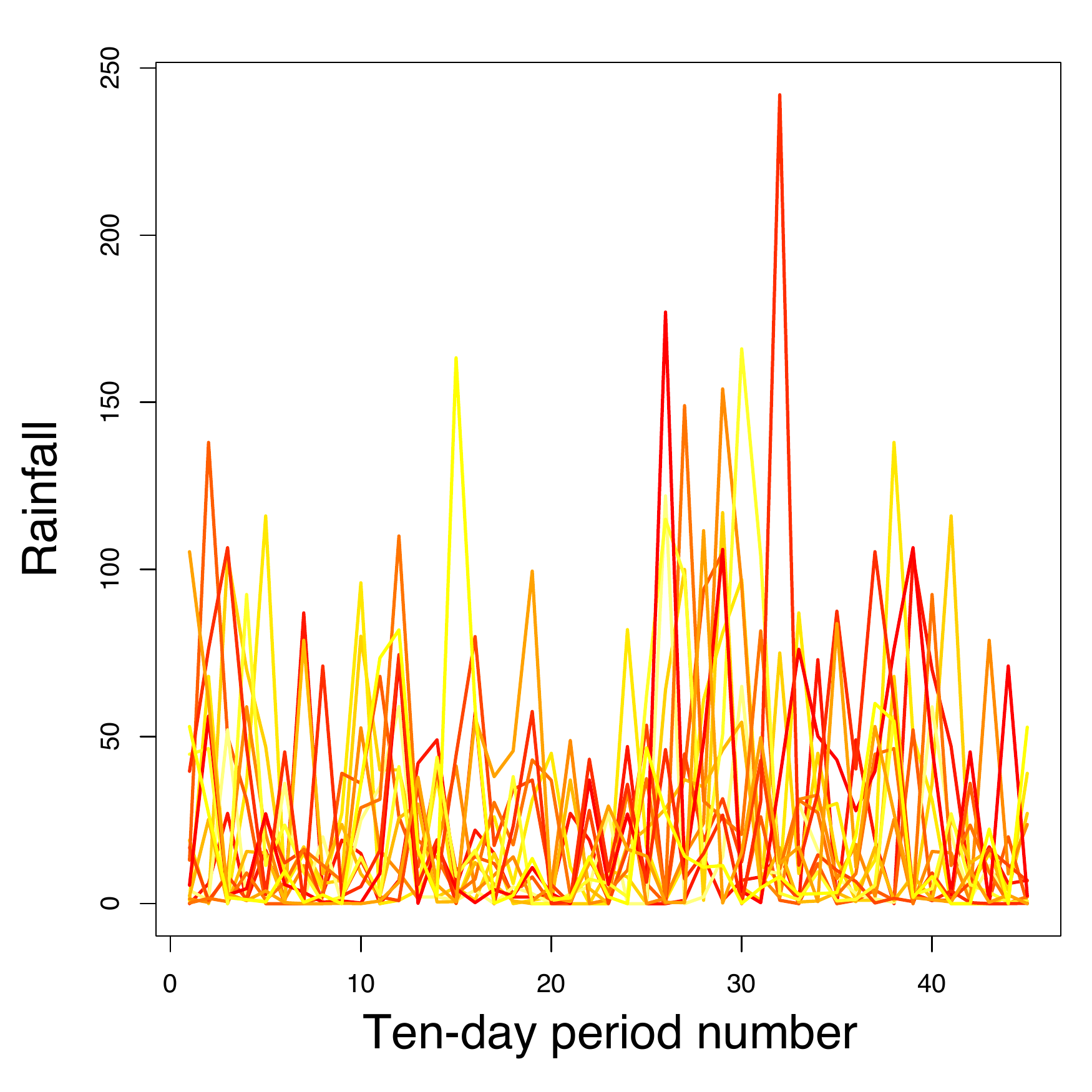}~\includegraphics[height=6cm]{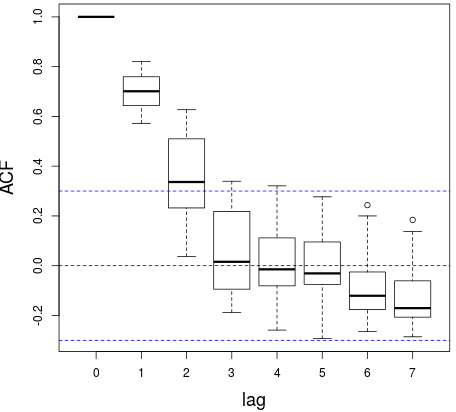} \\
	\caption{{Rainfall of the Truffle datsaset.} Left:{\it Plot
          shows the rainfall for each year, color-coded by their
          truffle yield.} Right:{\it Autocorrelation of the 13
          observed rainfall covariates, with lag in number of ten-day periods.} }
	\label{fig:donnees}
\end{figure}
The black P\'erigord truffle (\textit{Tuber Melanosporum Vitt.}) is
one of the most famous and valuable edible mushrooms, because of its
excellent aromatic and gustatory qualities. It is the fruiting body of
a hypogeous Ascomycete fungus, which grows in ectomycorrhizal
symbiosis with oaks species or hazelnut trees in Mediterranean
conditions. Modern truffle cultivation  involves the plantation of
orchards with tree seedlings inoculated with \textit{Tuber
  Melanosporum}. The planted orchards could then be viewed as
ecosystems that should be managed in order to favour the formation and
the growth of truffles. The formation begins in late winter with the
germination of haploid spores released by mature ascocarps. Tree roots
are then colonised by haploid mycelium to form ectomycorrhizal
symbiotic associations. Induction of the fructification (sexual
reproduction) occurs in May or June (the smallest truffles have been
observed in mid-June). Then the young truffles grow during summer
months and are mature between  the middle of November and the middle
of March (harvest season). The production of truffles should then be
sensitive to climatic conditions throughout the entire year
\citep{LeTacon2014}. However, to our knowledge few studies focus on
the influence of rainfall or irrigation during the entire year
(\citealp{Demerson2014,LeTacon2014}). Our aim is then to investigate
the influence of rainfall throughout the entire year on the production
of black truffles. Knowing this influence could lead to a
better management of the orchards, to a better understanding of the
sexual reproduction, and to a better understanding of the effects of
climate change. Indeed, concerning sexual reproduction,
\citet{LeTacon2014,LeTacon2016} made the assumption that climatic
conditions could be critical for the initiation of sexual reproduction throughout the development of the mitospores expected to occur in late winter or spring. And concerning climate change, its consequences on the geographic distribution of truffles is of interest (see \citealp{Splivallo2012} or \citealp{Buntgen2011}, among others).

The analyzed data were provided by J. Demerson. They consist of the
rainfall records on an orchard near Uz\`es (France) between 1985 and
1999, and of the production of black truffles on this orchard between
1985 and 1999. In practice, to explain the production of the year $n$,
we take into account the rainfall between the 1st of January of the
year $n-1$ and the 31st of March of the year $n$. Indeed, we want to
take into account the whole life cycle, from the formation of new
ectomycorrhizas following acospore germination during the winter
preceding the harvest (year $n-1$) to the harvest of the year $n$. The
cumulative rainfall is measured every 10 days, hence between the 1st
of January of the year $n-1$ and the 31st of March of the year $n$ we
have the rainfalls associated with 45 ten-day periods, see Figure~\ref{fig:donnees}. This dataset can be considered as reliable, as the rainfall records have been made exactly on the orchard, and the orchard was not irrigated. 

\paragraph*{Biological assumptions at stake} From the literature we
can spotlight the following periods of time which might influence the
growth of truffles.
\begin{itemize}
\item[Period \#1:] Late spring and summer of year $n-1$. This is the (only) period for
  which all experts are unanimous to say it has a particular effect. \citet{Buntgen2012a}, \citet{Demerson2014} or \citet{LeTacon2014} all confirm the
importance of the negative effect of summer hydric deficit on truffle production: they found it
to be the most important factor influencing the production. Indeed, in summer the truffles need
water to survive the high temperatures and to grow. Otherwise they can
dry out and die.
\item[Period \#2:] Late winter of year $n-1$, as shown by \citet{Demerson2014} and
\citet{LeTacon2014}. Indeed, as explained in \citet{LeTacon2014}, consistent water availability
in late winter could support the formation of new mycorrhizae, thus allowing a new
cycle. Moreover, from results obtained by \citet{Healy2013} they made the assumption that
rainfall is critical for the initiation of sexual reproduction throughout development of
mitospores, which is expected to occur in late winter or spring of the year $n-1$. This is an
assumption as the occurrence and the initiation of sexual reproduction is largely unknown, see
\citet{Murat2013} or \citet{LeTacon2016}.
\item[Period \#3:] November and December of year $n-1$, as claimed by
  \citet{Demerson2014} and \citet{LeTacon2014}. Le Tacon et
  al. explained that rainfall in autumn allows the growth of young
  truffles which have survived the summer. 
\item[Period \#4:] September of year $n-1$, as claimed by
  \citet{Demerson2014}. Excess water in this period should be harmuful to truffles. The assumption made was that in September the
  soil temperature is still high, so micro-organisms responsible for
  rot are quite active, while a wet truffle has its
  respiratory system disturbed and can not defend itself against these
  micro-organisms.  
\end{itemize}
The challenge is to confirm some of these periods with Bliss, despite
the small size of the dataset. In particular, each rainfall curve is
discretized with only 45 points (cumulative rainfall every 10 days)
and we have at our disposal only 13 observations. 

\begin{figure}[ht]
  \centering
  \includegraphics[height=6cm]{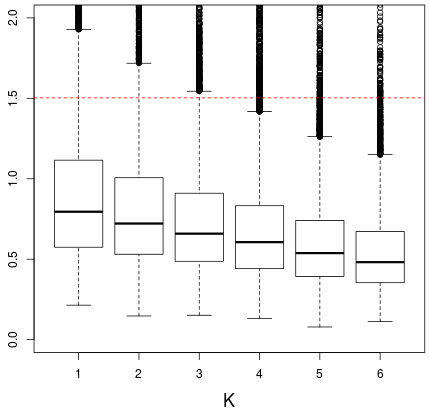}~\includegraphics[height=6cm]{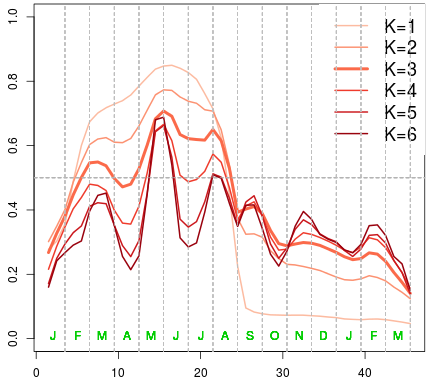}
  \caption{Sensitivity of Bliss to the value of $K$ on the truffle
    dataset. Left: {\it Boxplot of the posterior distribution of the
      variance of the error,
      $\sigma^2$, compared to the variance of the output $y$ (red
      dashed line)}. Right: {\it Posterior probability
    $\alpha(t|\mathcal D)$ for different values of $K$.}}
  \label{fig:sensitivity_K}
\end{figure}

\paragraph*{Running Bliss} As explained above (in Section 3.2), part of the difficulty of
the inference problem comes from autocorrelation within the
covariate. Figure~\ref{fig:donnees} shows that the autocorrelation can
be considered as null when the lag is 3 or more in number of ten-day periods. In other
words the rainfall background presents autocorrelation within a period of
time of about a month (keeping in mind that the whole history we
consider lasts 15 months).

The first and maybe most important hyperparameter is $K$, the number
of intervals in the coefficient functions from the prior. Because of
the discretization of the rainfall, and the number of observations,
the value of $K$ should stay small to remain parsimonious. 
Because of the size of the dataset, we have set the hyperparameter $a$
to obtain a prior probability of being in the support of about
$0.5$. The results are given in Figure~\ref{fig:sensitivity_K}. As can
be seen on the left of this Figure, the error variance $\sigma^2$ decreases
when $K$ increases, because models of higher dimension can more easily
fit the data. The main question is when do they overfit the data?
Looking at the right panel of Figure~\ref{fig:sensitivity_K}, we can
consider how the posterior probability $\alpha(t|\mathcal D)$ depends
on the value of $K$ and choose a reasonable value. First, for $K=1$ or
$2$, the posterior probability is high during a first long
period time until August of year $n-1$ and falls to  much lower values
after that. Thus, these small values of $K$ provide a rough picture of
dependency. Secondly, for $K=4, 5$ or $6$, the posterior probability
$\alpha(t|\mathcal D)$ varies between $0.2$ and $0.7$ and shows
doubtful variations after November of year $n-1$ and other strong
variations during the summer of year $n-1$ that are also doubtful. Hence we decided to
rely on $K=3$ although this choice is rather subjective.

\paragraph*{Conclusions on the truffle dataset}
We begin by noting that about half of the variance of the output (the amount of
production of truffle) is explained by the rainfall given the posterior
distribution of $\sigma^2$ in the left panel of Figure~\ref{fig:sensitivity_K}.
The support estimate $\widehat{S}_{0.5}(\mathcal D)$ with $K=3$ is composed
of two disjoint intervals: a first one from May of year $n-1$ to the
second ten-day period  of August with the highest posterior probability, and a
second one from the third ten-day period of February of year $n-1$ to the end
of March of year $n-1$ with a smaller posterior probability. Thus, as
far as we can tell from this analysis, Periods \#1 and \#2 are
validated by the data. Period \#3 cannot be validated although
the posterior probability $\alpha(t|\mathcal D)$ presents small bumps
around theses periods of time for highest values of $K$. For $K=3$,
the value of $\alpha(t|\mathcal D)$ stays around $0.3$ on Period \#3. Finally,
regarding Period \#4, we can see a small bump on the curve
$\alpha(t|\mathcal D)$ around this period of time even for $K=3$, but
the highest value of the posterior probability on this period is about
$0.4$. Hence we chose to remain undecided on Period \#4.

\section{Conclusion}
In this paper, we have provided a full Bayesian methodology to analyse linear models with
time-dependent functional covariates. The main purpose of our study was to estimate of the
support of the coefficient function to search the periods of time which influences the most the
outcome. We rely on piecewise constant coefficient functions to set the prior, which has
four benefits. The first benefit is parsimony of the Bliss model, which turns two thirds of the
parameter's dimension to the estimation of the support. The second benefit with our
Bayesian setting that begins by defining the support is that we can rely on the ridge-Zellner
prior to handle the autocorrelation within the functional covariate. This fact sets Bliss apart
from Bayesian methods relying on spike-and-slab prior to handle sparsity. The third benefit is
avoiding cross-validation to tune internal parameters of the method. Indeed, cross-validation
methods optimize the performance regarding the model's predictive power, and not the accuracy
of the support estimate. And, last but not least, the fourth benefit is the ability to compute
numerically the posterior probability that a given date is in the support,
$\alpha(t|\mathcal D)$, whose value gives a clear hint on the reliability of the support
estimate. Nevertheless a serious limitation of our Bayesian model is that it can handle only
covariate functions of one variable (we call time in the paper). Indeed the shape of the
support of a function of more than one variable is much more complex than an union of intervals
and cannot be easily modelled in a nonparametric, but parsimonious manner.

We have provided numerical results regarding the power of Bliss on a bunch of synthetic
datasets as well as a dataset studying the black P\'erigord truffle.  We have shown by
presenting some of these examples in details how we can interpret the results of Bliss, in
particular how we can rely on the posterior probabilities $\alpha(t|\mathcal D)$ or the heatmap
of posterior distribution of the coefficient function to assess the reliability of our
estimates. Bliss provides two main outputs: first an estimate of the support of the coefficient
function without targeting the coefficient function, and second a trade-off between support
estimate and coefficient function estimate through the stepwise estimate of Proposition~\ref{pro:3}.
Moreover our prior can straightforwardly be encompassed into a linear model with other functional or scalar
covariates.


\bibliography{bib/BLiSS}

\section*{Acknowledgement}
  We are very grateful to Jean Demerson for providing the truffle dataset and for his
  explanations. Pierre Pudlo carried out this work in the framework of the Labex Archim\`ede
  (ANR-11-LABX-0033) and of the A*MIDEX project (ANR-11-IDEX-0001-02), funded by the
  ``Investissements d'Avenir'' French Government program managed by the French National Research
  Agency (ANR).

\section*{Supplementary Materials}
The implementation of the method is available at the following
webpage:\\
\url{http://www.math.univ-montp2.fr/~grollemund/Implementation/BLiSS/}.

\appendix
\section{Theoretical results}
\label{sec:Theoretical}

\subsection{Proof of Theorem~\ref{thm:support}}
\label{sec:thm1}
Without loss of generality we can assume that $\mathcal T=[0;1]$.
We begin the proof with the following lemma whose simple proof is left
to the reader.
\begin{lemma} \label{lemma}
  Set $\psi^\ast(\gamma, \alpha)=\min\{\gamma(1-\alpha)\, ;\,
  (1-\gamma)\alpha\}$ for any $\alpha, \gamma\in[0;1]$. We have
  \[
  \psi^\ast(\gamma,\alpha) = \begin{cases} \gamma(1-\alpha) & \text{if }
      \gamma\le \alpha, \\  (1-\gamma)\alpha &\text{if }\gamma \ge \alpha.\end{cases}
  \]
\end{lemma}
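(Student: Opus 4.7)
The plan is essentially a one-line algebraic comparison of the two candidate values inside the minimum, namely $\gamma(1-\alpha)$ and $(1-\gamma)\alpha$. I would expand both expressions and look at the sign of their difference: $\gamma(1-\alpha) - (1-\gamma)\alpha = \gamma - \gamma\alpha - \alpha + \gamma\alpha = \gamma - \alpha$. So the first quantity is smaller than the second if and only if $\gamma \le \alpha$.

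From this comparison the lemma follows immediately by a case split on the sign of $\gamma - \alpha$. If $\gamma \le \alpha$, then $\gamma(1-\alpha) \le (1-\gamma)\alpha$ and therefore $\psi^\ast(\gamma,\alpha) = \gamma(1-\alpha)$; if $\gamma \ge \alpha$, then the opposite inequality holds and $\psi^\ast(\gamma,\alpha) = (1-\gamma)\alpha$. The boundary case $\gamma = \alpha$ is consistent because the two expressions coincide there (both equal $\gamma(1-\gamma)$), so either branch of the stated piecewise formula gives the same value.

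There is no real obstacle: both $\gamma$ and $\alpha$ lie in $[0,1]$ so all quantities are nonnegative and the minimum is well defined, and the key inequality reduces to $\gamma \le \alpha$ after the $\gamma\alpha$ terms cancel. This is why the authors write that the proof is left to the reader.
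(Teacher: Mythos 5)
Your proof is correct: the difference $\gamma(1-\alpha)-(1-\gamma)\alpha=\gamma-\alpha$ immediately determines which term achieves the minimum, and the boundary case $\gamma=\alpha$ is handled since both branches agree there. The paper leaves this proof to the reader, and your one-line algebraic comparison is exactly the intended argument.
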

Recall that the posterior loss we optimise is given in
\eqref{eq:optim.support}, where $S$ is any Borel subset of $\mathcal T=[0;1]$.
Using Fubini's theorem (for non-negative functions) and the definition
of $\alpha(t|\mathcal D)$ given in \eqref{eq:alpha.posterior}, we have
\begin{align}
  \int_{\Theta_K} L_\gamma(S, S_\theta) \pi_K(\theta|\mathcal D)\deriv
  \theta 
  &= \gamma\int_0^1 \int_{\Theta_K}\mathds{1}\{t\in S\setminus
    S_\theta\} \pi_K(\theta|\mathcal D)\deriv\theta \deriv t \notag
\\ & \quad
    + (1-\gamma)\int_0^1 \int_{\Theta_K}\mathds{1}\{t\in S_\theta\setminus
    S\} \pi_K(\theta|\mathcal D)\deriv\theta \deriv t \notag
    \\
  & = \int_0^1 \psi_S\big(t,\gamma, \alpha(t|\mathcal D)\big)\deriv t
    \label{eq:pr1}
\end{align}
where, for all $\alpha\in[0;1]$ we have set
\[
\psi_S(t,\gamma, \alpha) = \mathds{1}\{t\in S\}  \gamma\big(1-\alpha\big)
    + \mathds{1}\{t\not\in S\} (1-\gamma) \alpha.
\]
Now, whatever the set $S$, $\psi_S(t,\gamma, \alpha) \ge
\psi^\ast(\gamma, \alpha)$. Reporting this bound in \eqref{eq:pr1} yields
\[
\int_{\Theta_K} L_\gamma(S, S_\theta) \pi_K(\theta|\mathcal D)\deriv
  \theta  \ge \int_0^1 \psi^\ast\big(\gamma, \alpha(t|\mathcal D) \big)\deriv t
\]
whatever the Borel set $S$.  Moreover, this inequality is an equality
if and only if the Borel set $S$ is chosen so that, for almost all
$t\in[0;1]$,
$\psi_S\big(t,\gamma, \alpha(t|\mathcal D)\big) = \psi^\ast\big(\gamma,
\alpha(t|\mathcal D)\big)$. Using Lemma~\ref{lemma}, the last condition
is equivalent to saying that for almost all $t\in[0;1]$, either
$\alpha(t|\mathcal D)=\gamma$ or $\big(t\in S \iff \gamma \le\alpha(t|\mathcal D)\big)$.
This concludes the proof of Theorem~\ref{thm:support}.
\hfill\qed

\subsection{Proof of Proposition~\ref{pro:1}}
\label{sec:proof1}
  Obviously, $\widehat{\beta}_{L^2}(\cdot)$ minimizes
  \[
  \int \int_{\mathcal T} \left(\beta_\theta(t) - d(t)\right)^2 \deriv t\ \pi_K(\theta|\mathcal D)\deriv \theta = 
  \int_{\mathcal T} \int \left(\beta_\theta(t) - d(t)\right)^2 \pi_K(\theta|\mathcal D)\deriv \theta\ \deriv t
  \]
  because it does optimize $\int \left(\beta_\theta(t) - d(t)\right)^2 \pi_K(\theta|\mathcal D)\deriv \theta$ for all $t\in\mathcal T$. It remains to show that $\widehat{\beta}_{L^2}(\cdot)\in L^2(\mathcal T)$.  
  We have
  \begin{align*}
    \|\widehat{\beta}_{L^2}(\cdot)\|^2 & = \int_{\mathcal T} \left(\int\beta_\theta(t)\pi_K(\theta|\mathcal D)\deriv \theta\right)^2\deriv t
    \\
     & = \int_{\mathcal T}\iint \beta_\theta(t)\beta_{\theta'}(t)\ \pi_K(\theta|\mathcal D) \pi_K(\theta'|\mathcal D) \deriv\theta \deriv\theta' \deriv t
    \\
    & = \iint \int_{\mathcal T} \beta_\theta(t)\beta_{\theta'}(t)\deriv t\ \pi_K(\theta|\mathcal D) \pi_K(\theta'|\mathcal D) \deriv\theta \deriv\theta'
    \\
    & \le \iint \|\beta_\theta(\cdot)\| \|\beta_{\theta'}(\cdot)\|\ \pi_K(\theta|\mathcal D)\pi_K(\theta'|\mathcal D) \deriv\theta \deriv\theta' 
      \quad \text{with Cauchy-Schwarz inequality}
    \\
    & \le \left(\int \|\beta_\theta(\cdot)\| \pi_K(\theta|\mathcal D) \deriv\theta \right)^2
  \end{align*}
  And the last integral is finite because of the assumption. Hence $\widehat{\beta}_{L^2}(\cdot)$ is in $L^2(\mathcal T)$. \hfill \qed
\subsection{Proof of Proposition~\ref{pro:3}}
\label{sec:proof2}
First, the norm $\| d(\cdot) - \widehat{\beta}_{L^2}(\cdot)\|$ is non negative, hence the set
\[
\Big\{\| d(\cdot) - \widehat{\beta}_{L^2}(\cdot)\|,\ d(\cdot) \in \mathcal E_{K_0}^{\eps} \Big\}
\]
admits an infimum. Let $m$ denote this infimum. We have to prove that $m$ is actually a minimum of the above set, namely that there exists a function  $d(\cdot) \in \mathcal E_{K_0}^{\eps}$ such that
$m = \| d(\cdot) - \widehat{\beta}_{L^2}(\cdot)\|$.

\bigskip

To this end, we introduce a minimizing sequence $\{ d_n(\cdot)\}$ and
we will show that one of its subsequence admits a limit within $\mathcal E_{K_0}^{\eps}$.
Let $d_n(\cdot)$ be such that

\begin{equation}
m= \inf \Big\{\| d(\cdot) - \widehat{\beta}_{L^2}(\cdot)\|,\ d(\cdot)
\in \mathcal E_{K_0}^{\eps} \Big\}
\le \| d_n(\cdot) - \widehat{\beta}_{L^2}(\cdot)\|  \le m +
2^{-n}.\label{eq:mmmm}
\end{equation}
The step function $d_n(\cdot)$ can be written as 
\[
d_n(t) = \sum_{k=1}^L \alpha_{k,n} \mathds 1\{t \in (a_{k,n},\ b_{k,n})\}
\]
where the $(a_{k,n},\ b_{k,n})$, $k=1,\ldots,L$ are non overlapping
intervals. Note that their number $L$ does not depend on $n$ because
all $d_n(\cdot)$ lie in $\mathcal E_{K_0}$ for some fixed value of $K_0$,
and we can always choose $L= 2K_0-1$. Moreover, because $d_n(t)$ is in
$\mathcal F^{\eps}$, we can assume that 
\begin{equation} \label{eq:epskn}
b_{k,n}-a_{k,n}\ge \eps, \quad \text{for all }k,n.
\end{equation}

\bigskip

Now the sequence $\{a_{1,n}\}_n$ has its elements in the compact
interval $\mathcal T$ hence we extract a subsequence (still denoted
$\{a_{1,n}\}_n$) which converges an element $a_{1,\infty}$ of $\mathcal
T$.
Likewise, by extracting subsequences $2L$ times, we can assume that all
sequences $\{a_{1,n}\}_n$,\ldots, $\{a_{L,n}\}_n$, $\{b_{1,n}\}_n$,
\ldots, $\{b_{L,n}\}_n$ are convergent, and that
\[
a_{k,\infty} = \lim_{n\to\infty} a_{k,n}, \quad
b_{k,\infty} = \lim_{n\to\infty} b_{k,n}, \quad \text{and} \quad
b_{k,\infty}-a_{k,\infty}\ge \eps, \quad  k=1,\ldots, L
\]
where the last inequalities come from \eqref{eq:epskn}.

\bigskip

The sequence $d_n(\cdot)$ is bounded (in $L^2$-norm):
\[
\|d_n(\cdot)\| \le \|\widehat\beta_{L^2}(\cdot)\| + \|d_n(\cdot) -
\widehat\beta_{L^2}(\cdot)\|
\le R + \sqrt{m+1}
\]
with \eqref{eq:mmmm}, where $R=\|\widehat\beta_{L^2}(\cdot)\|$. Moreover
\[
\|d_n(\cdot)\|^2 = \sum_{k=1}^L
\alpha^2_{k,n}\big(b_{k,n}-a_{k,n}\big) 
\ge \eps \sum_{k=1}^L
\alpha^2_{k,n}.
\]
Hence, each sequence $\{\alpha_{1,n}\}_n$, \ldots,
$\{\alpha_{L,n}\}_n$ is bounded. Thus, by further extracting
subsubsequences,
we can assume that, for $k=1,\ldots, L$,
\[
\lim_{n\to\infty}\alpha_{k,n}=\alpha_{k,\infty}
\]

\bigskip

Finally, by setting
\[
d_\infty(\cdot) = \sum_{k=1}^L \alpha_{k,\infty} \mathds 1\{t \in (a_{k,\infty},\ b_{k,\infty})\}
\]
we can easily prove that $d_n(\cdot)$ tends to $d_\infty(\cdot)$ in
$L^2$-norm and that $d_\infty(\cdot)\in \mathcal E_{K_0}^{\eps}$. And,
with (\ref{eq:mmmm})
\[
m = \| d_\infty(\cdot)-\widehat\beta_{L^2}(\cdot) \|
\]
which concludes the proof. \hfill \qed
\subsection{Topological properties of $\mathcal E_K$}
\label{sec:proof3}
\begin{proposition} \label{pro:topo} Let $K\ge 1$. 
  \begin{itemize}
  \item[\it (i)] The convex hull of $\mathcal E_K$ is $\mathcal E$. 
  \item[\it (ii)]  Under the $L^2(\mathcal T)$-topology, the closure of $\mathcal E$ is $L^2(\mathcal T)$.
  \end{itemize}
\end{proposition}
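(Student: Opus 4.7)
The plan is to establish both parts by direct elementary arguments.

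For part (i), I would begin by noting that $\mathcal E$ is itself convex: the sum of two step functions is again a step function (refine both underlying partitions), and scalar multiplication preserves the step-function structure. Combined with $\mathcal E_K \subset \mathcal E$, this immediately yields $\mathrm{conv}(\mathcal E_K) \subset \mathcal E$. For the reverse inclusion, I would pick any $f \in \mathcal E$, which by definition belongs to $\mathcal E_L$ for some $L \ge 1$. If $L \le K$, then $f$ already lies in $\mathcal E_K$ after padding its expression with extra terms having coefficient $\beta^\ast_k = 0$. If $L > K$, I would partition the $L$ summands of $f$ into $m = \lceil L/K \rceil$ subfamilies of size at most $K$, obtaining $f = g_1 + \cdots + g_m$ with each $g_j \in \mathcal E_K$.

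The key algebraic observation is that $\mathcal E_K$ is stable under multiplication by an arbitrary real scalar, since the scalar can be absorbed into each $\beta^\ast_k$. Consequently $m\, g_j \in \mathcal E_K$ for every $j$, and the identity
\[
f \;=\; \sum_{j=1}^m \frac{1}{m}\bigl(m\, g_j\bigr)
\]
exhibits $f$ as a finite convex combination of elements of $\mathcal E_K$. This gives $\mathcal E \subset \mathrm{conv}(\mathcal E_K)$ and completes part (i).

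For part (ii), I would rely on the classical density of continuous functions in $L^2(\mathcal T)$ for bounded $\mathcal T$. Given $f \in L^2(\mathcal T)$ and $\varepsilon > 0$, I would first select a continuous $g$ with $\|f - g\| < \varepsilon/2$. By uniform continuity of $g$ on the compact $\mathcal T$, I would then choose a finite partition of $\mathcal T$ into intervals $(t_{j-1}, t_j]$ of sufficiently small mesh, so that the step function $h(t) = \sum_j g(t_j)\, \mathds 1\{t \in (t_{j-1}, t_j]\}$ satisfies $\|g - h\|_\infty$ as small as desired, hence $\|g - h\| < \varepsilon/2$. Since $h \in \mathcal E$, the triangle inequality yields $\|f - h\| < \varepsilon$, proving that $\mathcal E$ is dense in $L^2(\mathcal T)$.

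Both parts are elementary. The only point requiring genuine care is the rescaling step in part (i): what makes the argument go through is the observation that $\mathcal E_K$ is a real cone (stable under arbitrary, not merely positive, scaling), which removes any combinatorial obstruction to building the convex combination. Part (ii) is entirely standard.
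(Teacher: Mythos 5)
Your proof is correct and takes essentially the same route as the paper's: for part (i) both arguments rest on the observations that $\mathcal E$ is a linear space (hence convex) and that $\mathcal E_K$ is stable under arbitrary real scaling, so that any step function becomes a convex combination of elements of $\mathcal E_K$ after rescaling --- the only cosmetic difference being that the paper splits $f$ into single scaled indicators $a\mathbf 1\{t\in I\}$ while you group the summands $K$ at a time. For part (ii) you simply spell out the standard density argument that the paper dispatches with a citation to \citet{rudin1986real}.
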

\begin{proof}
The result of \textit{(ii)} is rather classical, see, e.g., \citet{rudin1986real}.
The convex hull of $\mathcal E_K$ includes any step function. Indeed, any step function can be written as a convex combination of simple $a\mathbf{1}\{t\in I\}$'s which all belongs to $\mathcal E_K$. Moreover, $\mathcal E$ is convex because it is a linear space. Hence claim \textit{(i)} is proven. 
\end{proof}

For a given $K$, the set of functions $\mathcal{E}_K$ is not suitable to define a projection of $\hat{\beta}_{L^2}(\cdot)$. Indeed, let $\{d_n(\cdot)\}$ be a minimizing sequence of the set $\big\{ \| d(\cdot) - \hat{\beta}_{L^2}(\cdot) \|, \, d(\cdot) \in \mathcal{E}_K(\cdot) \big\}$, so 
\begin{equation*}
m= \inf \Big\{\| d(\cdot) - \widehat{\beta}_{L^2}(\cdot)\|,\ d(\cdot)
\in \mathcal E_{K} \Big\}
\le \| d_n(\cdot) - \widehat{\beta}_{L^2}(\cdot)\|  \le m +
2^{-n}.
\end{equation*}
Knowing that $\hat{\beta}_{L^2}(\cdot)$ and $d_n(\cdot)$ belong to $L^2$ for all $n$, we have 
$$d_n(.) \in \mathcal{E}_K \cap \mathcal{B}_{L^2}(R + m +1), \qquad \text{for all } n,$$
 where $\mathcal{B}_{L^2}(r)$ is the $L^2$-ball of radius $r$ around the origin. Note that $\mathcal{E}_K \cap \mathcal{B}_{L^2}(R + m +1)$ is not a compact set, for example consider $d_n(t) = \sqrt{n} \, \mathbf{1}\{ t \in [0,\frac{1}{n}]\}$. Hence it is not possible to extract a subsequence of $\{d_n(\cdot)\}$ which converges to a $d_{\infty}(\cdot) \in \mathcal{E}_K$ such that $\| d(\cdot) - \hat{\beta}_{L^2}(\cdot) \|=m$. 

\section{Details of the implementations}
\subsection{Gibbs algorithm and Full conditional distributions}
\label{app:conditional}
The full conditional distributions for the Gibbs Sampler in Section~\ref{sec:implementation} are the following,
    \begin{align*}
        \mu, \beta^* | y, \sigma^2 ,m,\ell& \sim \mathcal{N}_{K+1} \left( (\underbar{x}^T \underbar{x} + \underbar{V})^{-1} \underbar{x} y ~ , ~ \sigma^2 (\underbar{x}^T \underbar{x} + \underbar{V})^{-1} \right), \\
        \sigma^2 | y, \mu, \beta^*,m,\ell   & \sim \Gamma^{-1} \left(a + \frac{n+K+1}{2},b + \frac{1}{2}\text{SSE} + \frac{1}{2}\left\| \beta^* - \eta \right\|^2_{V^{-1}} \right), \\            
        \pi \left( m_k | y, \mu, \beta^*,\sigma^2,m_{-k},\ell \right) & \propto \exp \left( - {\text{SSE}}/{2\sigma^2} \right) 
      \times \pi(\beta^\ast| m, \ell, \sigma^2) \\
        \pi \left( \ell_k | y, \mu, \beta^*,\sigma^2,m,\ell_{-k} \right) & \propto \exp \left( - {\text{SSE} }/{2\sigma^2}  \right) \times \pi(\ell_k) \times \pi(\beta^\ast| m, \ell, \sigma^2)
    \end{align*}
where $\text{SSE}=\left\| y  -\mu\mathbf{1}_n - x_.(\mathcal{I}_.) \beta^* \right\|^2$, $ \underbar{x} =  \Big(\mathbf{1}_n ~ \mid ~ x.(\mathcal{I}.) \Big) $, and 
$$ \underbar{V} = \begin{pmatrix}
    v_0^{-1} & 0 \\
    0 & n^{-1} \Big(x_.(\mathcal{I}_.)^T x_.(\mathcal{I}_.) + v I_K \Big)
\end{pmatrix}. $$
The full conditional distributions for the hyperparameters $m_k$ and $\ell_k$ are unusual distributions. 
As the covariate curves $x_i$ are observed on a grid $\mathcal{T}_G = (t_j)_{j=1,\dots,p}$, we consider that $m_k$ belongs to $\mathcal{T}_G$ and $\ell_k$ is such that $m_k \pm \ell_k \in \mathcal{T}_G$. Thus, the number of possible values for $m_k$ and $\ell_k$ is finite and the full conditional distributions of $m_k$ and $\ell_k$ are easily computable.
\subsection{Simulated annealing algorithm}
\label{ann:SANN}
We give in this section the details of the Simulated Annealing algorithm we use. Let $\tilde{\Theta}_{K_0} = \bigotimes_{K=1}^{K_0} \big(K, \Theta_K \big)$ where $\Theta_K$ is the space of all $\theta=(\beta_1^*, \dots, \beta_K^*, m_1, \dots, m_K, \ell_1, \dots, \ell_K) $ and let the function $C(d(\cdot)) = \big\| d(\cdot) - \hat{\beta}_{L^2}(\cdot) \big\|^2$.  \\

	\hrule \vspace*{-.1cm} 
	\textbf{Algorithm :} \textit{Simulated Annealing} \vspace{.1cm}
	\hrule
	\begin{itemize}
		\itemb Initialize: a deterministic decreasing schedule of temperature $(\tau_i)_{i=1,\dots,N_{\text{SANN}}}$, a value of $K_0$ and an initial vector $(K_{(0)}, \theta_{(0)} ) \in \tilde{\Theta}_{K_0}$.
		\itemb Compute the function $\beta_{(0)}(t)$ from $(K_{(0)}, \theta_{(0)} )$.
		\itemb Repeat for $i$ from $1$ to $N_{\text{SANN}}$ : 
		\begin{itemize}
			\itemb Choose randomly a move from $(K_{(i-1)}, \theta_{(i-1)} )$ to $(K', \theta' )$ among : 
			\begin{enumerate}
				\item propose a new ${\beta_k^*}'$ for an arbitrary $k \leq K_{(i-1)}$,
				\item propose a new $m_k'$ for an arbitrary $k \leq K_{(i-1)}$,
				\item propose a new $\ell_k'$ for an arbitrary $k \leq K_{(i-1)}$,
				\item propose to append a new interval $({\beta^* }', m',\ell')$ or
				\item propose to drop out an interval $(\beta^*_k , m_k,\ell_k)$ for an arbitrary $k \leq K_{(i-1)}$.
			\end{enumerate}  
			\itemb Compute the function $\beta'(t)$ from the proposal $(K',\theta')$.
			\itemb Compute the acceptance ratio $$ \alpha  = \min \left\{ 1 , \exp \left( \frac{C(\beta'(\cdot)) - C\big(\beta_{(i)}(\cdot)\big)}{\tau_i} \right) \right\}. $$
			\itemb Draw $u$ from $\text{Unif}(0,1)$.
			\itemb If $u < \alpha$, $(K_{(i)}, \theta_{(i)} ) = (K', \theta' )$ (move accepted), \\
			~~~ ~~~ ~~~  else $(K_{(i)}, \theta_{(i)} ) = (K_{(i-1)}, \theta_{(i-1)} )$ (move rejected).
			\itemb Compute the function $\beta_{(i)}(t)$ from $(K_{(i)}, \theta_{(i)} )$.
		\end{itemize}
		\itemb Return the iteration $(K_{(i)}, \theta_{(i)} )$ minimizing the criteria $C(.)$. 
	\end{itemize}	
	\hrule \vspace{0.3cm}	
For the schedule of temperature, we use by default a logarithmic schedule (see \citealp{Belisle1992}), which is given for each iteration $i$ by 
\begin{equation}
	\text{Te} / \log \left((i-1) + e \right),
\end{equation}
where $\text{Te}$ is a parameter to calibrate and corresponds to the
initial temperature. The result of the Simulated Annealing algorithm
is sensitive to the scale of $\text{Te}$ and it is quite difficult to
find an a priori suitable value. For example, if the initial
temperature is too small, almost all the proposed moves are rejected
during the algorithm. On the opposite, if it is too large, they are
almost all accepted. So, we run the algorithm a few times and each time $\text{Te}$ is determined with respect to the previous runs. For instance, if for a run the moves are always rejected or always accepted, the initial temperature for the next run is accordingly adjusted. Only 2 or 3 runs are sufficient to find a suitable scale of $\text{Te}$. 



\end{document}